\documentclass{elsarticle}

\usepackage{hyperref}

\journal{\url{https://doi.org/10.1016/j.jspi.2019.01.006} in J. Statist. Plann. Inference; }











\usepackage{array}
\usepackage{epsfig}
\usepackage{fancyheadings}
\usepackage{rotating}

\usepackage{amsmath}
\usepackage{amssymb}
\usepackage{amsfonts}
\usepackage{multirow}
\usepackage{amsthm}
\usepackage[none]{hyphenat}

\setcounter{page}{1}
\newtheorem{theorem}{Theorem}
\newtheorem{lemma}{Lemma}

\newtheorem{definition}{Definition}

\newtheorem*{remark}{Remark}

\fancypagestyle{firststyle}
{
   \fancyhf{sdfdf}
   \fancyfoot[C]{\footnotesize Page \thepage\ of \pageref{LastPage}}
}
\pagestyle{empty}



\usepackage{color}

\newcommand{\oldnew}{}
\newcommand{\revision}[1]{#1}

\usepackage[all]{xy}
\usepackage{color,bm}
\usepackage{url}

\makeatletter
\newcommand*{\indep}{%
  \mathbin{%
    \mathpalette{\@indep}{}%
  }%
}
\newcommand*{\nindep}{%
  \mathbin{
    \mathpalette{\@indep}{\not}
  }%
}
\newcommand*{\@indep}[2]{%
  \sbox0{$#1\perp\m@th$}
  \sbox2{$#1=$}
  \sbox4{$#1\vcenter{}$}
  \rlap{\copy0}
  \dimen@=\dimexpr\ht2-\ht4-.2pt\relax
  \kern\dimen@
  {#2}%
  \kern\dimen@
  \copy0 
}
\makeatother
\newtheorem{assumption}{Assumption}
\usepackage{subfigure}
\newcommand*{\QEDB}{\hfill\ensuremath{\square}}
\def\d{\mathrm{d}}


 \usepackage{tikz}
\usetikzlibrary{shapes,arrows,positioning,calc}
\graphicspath{ {./Pictures/} }

\bibliographystyle{elsarticle-num}
\pagestyle{empty}

\def\T{\top}

 \usepackage{tikz}
\usetikzlibrary{shapes,arrows,positioning,calc}
\renewcommand{\thefootnote}{\arabic{footnote}}

\let\svthefootnote\thefootnote

\def\d{\mathrm{d}}


\newcommand\norm[1]{\lVert#1\rVert}
\newcommand{\vertiii}[1]{{\left\vert\kern-0.25ex\left\vert\kern-0.25ex\left\vert #1
    \right\vert\kern-0.25ex\right\vert\kern-0.25ex\right\vert}}
\newcommand{\argmin}{\arg\!\min}

\newcommand{\truepara}{{\beta^*}}
\newcommand{\trueparas}{{\beta^*_{S}}}
\newcommand{\trueparasc}{{\beta^*_{{S}^c}}}
\newcommand{\trueparaj}{{\beta^*_j}}
\newcommand{\estipara}{{\hat{\beta}}}
\newcommand{\estiparaj}{{\hat{\beta}_j}}
\newcommand{\para}{{\beta}}

\newcommand{\estiparas}{{\hat{\beta}_{S}}}

\newcommand{\tuning}{{\lambda}}
\newcommand{\R}{{\mathbb{R}}}
\newcommand{\primal}{{\hat{\alpha}}}
\newcommand{\prims}{{\hat{\alpha}_{S}}}
\newcommand{\primsc}{{\hat{\alpha}_{{S}^c}}}
\newcommand{\dual}{{\hat{\nu}}}
\newcommand{\duals}{{\hat{\nu}_{S}}}
\newcommand{\dualsc}{{\hat{\nu}_{{S}^c}}}
\newcommand{\dualj}{{\hat{\nu}_j}}
\newcommand{\deriva}{{W}}
\newcommand{\derivafun}{w}

\def\d{\mathrm{d}}
\newcommand{\res}{r}
\newcommand{\ratio}{a}
\newcommand{\redef}{=}

\newcommand{\cmax}{\ensuremath{c_{\max}}}

\newcommand{\constA}{C}

 \def\AIC{\textsc{aic}}
 \def\BIC{\textsc{bic}}
 \def\EBIC{\textsc{ebic}}
 \def\GIC{\textsc{gic}}
 \def\CV{\textsc{cv}}

 \makeatletter
 \newcommand{\opnorm}{\@ifstar\@opnorms\@opnorm}
\newcommand{\@opnorms}[1]{%
  \left|\mkern-1.5mu\left|\mkern-1.5mu\left|
   #1
  \right|\mkern-1.5mu\right|\mkern-1.5mu\right|
}
\newcommand{\@opnorm}[2][]{%
  \mathopen{#1|\mkern-1.5mu#1|\mkern-1.5mu#1|}
  #2
  \mathclose{#1|\mkern-1.5mu#1|\mkern-1.5mu#1|}
}
\makeatother

\begin{document}

\begin{frontmatter}

\title{Tuning parameter calibration for\\$\ell_1$-regularized logistic regression}

\author{Wei Li
}
\address{School of Mathematical Sciences, Peking University, Beijing,
China}

\author{Johannes Lederer\corref{mycorrespondingauthor}}
\cortext[mycorrespondingauthor]{Corresponding author}
\ead{ledererj@uw.edu}
\address{Departments of Statistics and Biostatistics, University of Washington, Seattle, WA, USA}

\begin{abstract}
Feature selection is a standard approach to understanding and modeling high-
dimensional classification data, but the corresponding statistical methods hinge on tuning parameters that are difficult to calibrate. In particular, existing calibration schemes in the logistic regression framework lack any finite sample guarantees. In this paper, we introduce a novel calibration scheme for~$\ell_1$-penalized logistic regression. It is based on simple tests along the tuning parameter path and is equipped with optimal  guarantees for feature selection. It is also amenable to easy and efficient implementations, and it rivals or outmatches existing methods in simulations and real data applications.
\end{abstract}

\begin{keyword}
Feature selection; Penalized logistic regression; Tuning parameter calibration
\end{keyword}

\end{frontmatter}
\pagestyle{empty}

\def\thefigure{\arabic{figure}}
\def\thetable{\arabic{table}}

\pagestyle{plain}
\def\n{\noindent}
\lhead[\fancyplain{} \leftmark]{}
\chead[]{}
\rhead[]{\fancyplain{}\rightmark}
\cfoot{}

\section{Introduction}
\label{sec:intro}

The advent of high-throughput technology has created a large demand for feature selection with high-dimensional classification data.
In gene expression analysis or genome-wide association studies, for example, investigators attempt to select from a large set of potential risk factors the predictors that are most useful in discriminating two or more conditions of interest.  The standard approaches for such tasks  are penalized likelihood methods \citep{buhlmann2011statistics,hastie2015statistical, bunea2008honest,ravikumar2010high,ryali2010sparse,Wu09}.
However, the performance of these methods hinges on the calibration of tuning parameters that balance  model fit and model complexity.

The focus of this paper is the calibration of the~$\ell_1$-penalized likelihood for feature selection in logistic regression.
 The most widely used schemes for this calibration are based on Cross-Validation~(\CV) \citep{stone1974cross} or information criteria, including the Akaike's information criterion (\AIC) \citep{akaike1973information}, the Bayesian information criterion (\BIC) \citep{schwarz1978estimating}, {\oldnew the extended Bayesian information criterion (\EBIC) \citep{chen2012extended},} {and the Generalized Information Criterion (\GIC) \citep{nishii1984asymptotic}}.
 \CV- and~\AIC-based procedures are designed for prediction and  thus typically not suited for feature selection~\citep{shao1993linear}.  In contrast, \BIC-{\oldnew, \EBIC-} and~\GIC-type procedures, see also the recent versions in \citep{wang2007tuning,chen2012extended},  are designed primarily for feature selection, and some consistency results for model selection have been derived~\citep{Fan13,Zhang10}.  Yet another approach based on a permutation idea has been introduced in~\citep{sabourin2015permutation}. However, all these methods share the same limitation in that they lack finite sample guarantees. This means that theoretical backup for applications, where samples sizes are always finite, is not available.

 In this paper, we introduce a novel calibration scheme based on a testing procedure and  sharp~$\ell_{\infty}$-bounds. It is easy to implement and computationally efficient, and in contrast to all previous approaches, it is indeed equipped with finite sample guarantees. Our proposal is thus of immediate practical and theoretical relevance.

The remainder of this paper is organized as follows. Section~\ref{sec:meth} contains our main proposal and the theoretical results. Section~\ref{sec:sim} and Section~\ref{sec:rd} demonstrate that our method is also a contender in simulations and real data applications. Section~\ref{sec:dis} contains  a brief discussion. The proofs and further simulations are deferred to the Appendix.

\vspace{5mm}
\noindent\textit{Notation.}\quad The index sets are denoted by $[k]=\{1,\ldots,k\}$ for~$k\in\{1,2,\dots\}$, and the cardinality of sets is denoted by $|\cdot |$.  For a given vector~$\beta\in \R^p$, the support set of~$\beta$ is written as~$\textnormal{supp}(\beta) \redef \{j\in [p]: \beta_j \neq 0\}$, and  for~$q\in[1,\infty]$, the~$\ell_q$-norm of~$\beta$ is denoted by~$\norm{\beta}_{q}$. The~$\ell_q$-induced matrix-operator norms are denoted by~$\opnorm{\cdot}_{q}$. Two examples are the spectral norm~$\opnorm{\cdot}_2$, which denotes the maximal singular value of a matrix, and the~$\ell_{\infty}$-matrix norm~$\opnorm{X}_{\infty} = \max_{i=1,\ldots,n}\sum_{j=1}^{p}|X_{ij}|$. The minimal and the maximal eigenvalue of a square matrix are denoted by~$\Omega_{\min}(\cdot)$ and~$\Omega_{\max}(\cdot)$, respectively. For a given subset~$A$ of~$[p]$, the vectors~$\beta_A\in \R^{\mid A\mid}$ and~$\beta_{A^c}\in \R^{\mid A^c \mid}$ denote the components  of~$\beta$ in~$A$ and in its complement~$A^c$, respectively, and given a matrix~$X\in \R^{n\times p}$, the matrix~$X_{A}$ denotes the sub-matrix of~$X$ with column indexes restricted to~$A$. The diagonal matrix with diagonal elements~$a_1,\ldots,a_n$ is denoted by~$\textnormal{diag}\{a_1,\ldots,a_n\}$.  The function~$\derivafun$ is finally defined as~$\derivafun(u,v) = \exp(u^\T v)/
(1+\exp(u^\T v))^2 $ for vectors~$u,v$ of the same length.

\section{Methodology}\label{sec:meth}

\subsection{Model and assumptions}

In this section, we formulate the general setting and introduce the assumptions required for the theoretical analysis.
We consider data in the form of a real-valued $n \times p $ design matrix~$X$ and a binary response vector~$Y = (y_1, \ldots, y_n)^\T$.  Our framework allows for high-dimensional data, where~$p$ rivals or outmatches~$n$. We denote the rows of~$X$ (i.e. the samples) by $x_{1}$, \ldots, $x_{n} \in \R^p$ and the columns of~$X$ (i.e. the predictors) by $x^{1}$, \ldots, $x^{p}\in \R^n$. The matrix $X$ can be deterministic or a realization of  a random matrix, {\oldnew but in any case, we assume that $X$ has been normalized, i.e., $\norm{x^j}_2=1$ for $j=1,\ldots,p$.
}

The design matrix $X$ and the response vector $Y$ are linked by the standard logistic regression model
\begin{equation}\label{eqn:LR}
 \Pr(y_i=1\mid x_{i})=\frac{\exp(x_{i}^\T\truepara)}{1+\exp(x_{i}^\T\truepara)} \qquad (i=1,\ldots,n)\,,
\end{equation}
 where $\truepara\in \R^p$ is the unknown regression vector. Our goal is feature selection (or also called support recovery), that is,  estimation of the support set $S\redef\textnormal{supp}(\truepara)$.  The starting point for approaching this task is the well-known family of estimators
\begin{equation}\label{eqn:SLR}
  \estipara_{\tuning}\in \argmin_{\beta\in \R^p}\left\{L(\para) + \tuning\norm{\beta}_1\right\}\qquad (\tuning>0)
\end{equation}
indexed by the tuning parameter $\tuning.$ The first term $L(\para) \redef \sum_{i=1}^n(\log(1+\exp(x_{i}^\T
\beta))-y_ix_{i}^\T\beta)/n$ is the negative log-likelihood function, and the second term is a regularization that exploits  that  $s=|S|\ll n,p$  in many applications. We estimate~$S$ by $\textnormal{supp}( \estipara_{\tuning})$ for a data-driven tuning parameter $\tuning\equiv\tuning(Y,X)$.

\revision{We take~\eqref{eqn:SLR} as our starting point, because it has become the most popular family of estimators in our context.
The main reason for this popularity is that $\ell_1$-regularization is equipped with fast algorithms and some theoretical understanding.
One might argue that different starting points might suit some task better;
for example, one drawback of $\ell_1$-regularization is that it leads to complex dependencies on the design matrix and can require potentially unrealistic assumptions \cite{dalalyan2017prediction,hebiri2013correlations,lederer2016oracle,van2013lasso,zhuang2018maximum}.
But, importantly, our calibration approach is agnostic to what estimator is used:
it only requires a suitable oracle inequality (while deriving such oracle inequalities might be difficult, of course).
In this sense, the following results are merely an indication of the full potential of our calibration scheme.
}

Support recovery \revision{with~\eqref{eqn:SLR} }  is feasible only if the correlations in the design matrix~$X$ are sufficiently small. In the following, we state corresponding assumptions that virtually coincide with those ones used by~\cite{ravikumar2010high} in the context of Ising models. The assumptions are formulated in terms of {\oldnew $X^\top\deriva X/n$, the Hessian of the log-likelihood function evaluated at the true regression parameter $\truepara$, where $\deriva\redef \textnormal{diag}\{\derivafun(x_{1},\truepara),\ldots,\derivafun(x_{n},\truepara)\}$.} We first require that the submatrix of the Hessian matrix corresponding to the relevant covariates has eigenvalues bounded away from zero.
 \begin{assumption}[Minimal eigenvalue condition]
   \label{Ass:BEC} It holds that
   \[
   c_{\min}=\Omega_{\min}(X_{S}^\T\deriva X_{S}/n)>0\, .
   \]
 \end{assumption}
\noindent Note that if this assumption were violated,  the relevant covariates would be linearly dependent, and the true support set~$S$ would not be well-defined. {\oldnew Thus, this condition ensures that the relevant covariates do not become highly dependent.}
We {\oldnew then} impose an irrepresentability condition~\citep{zhao2006model}.
 \begin{assumption}[Irrepresentability condition]
  \label{Ass:MIC} It holds that
\[\gamma=1- \opnorm{(X_S^\T \deriva X_{S})^{-1}X_{S}^\T \deriva X_{S^c}}_{\infty}>0\,.\]
\end{assumption}
\noindent This assumption is a modified version of the irrepresentability condition commonly used in the theory for linear regression with the Lasso~\citep{zhao2006model}.
More generally, irrepresentability conditions prevent the relevant covariates from being strongly correlated with the irrelevant covariates. This  ensures that the true support set can be identified with finitely many samples.


{\oldnew Assumption~\ref{Ass:BEC} and~\ref{Ass:MIC} also make assumptions on $s=|S|$ implicit.
For illustration, suppose that the two assumptions hold for the population matrices $E_{\beta^*}(X_S^{\top} WX_S^{\top})$ and $\big\{E_{\beta^*}(X_S^{\top} WX_S^{\top})\big\}^{-1}E_{\beta^*}(X_S^{\top} WX_{S^c}^{\top})$. Then, according to Lemma 5 and 6 in \cite{ravikumar2010high}, these two assumptions hold with large probability if $\log(p)s^3=o(n)$. In this spirit, larger $s$ makes the conditions more restrictive on other aspects of the model.}

Importantly, however, the above assumptions on the design are not needed in the analysis of the proposed scheme itself. Instead, the assumptions are needed to ensure that there is a viable estimator in the family~\eqref{eqn:SLR} at all. We discuss this in the following section.

\subsection{$\ell_{\infty}$-estimation and support recovery}\label{subsec:boundandsr}

$\ell_{\infty}$-estimation and support recovery  are two closely related aspects of high-dimensional logistic regression. In this section, we thus establish oracle inequalities for both these tasks.

To state the result, we define the vector of residuals as $\varepsilon = (\varepsilon_1,\ldots,\varepsilon_n)^\T$ with entries  $\varepsilon_i\redef y_i-\Pr(y_i=1\mid x_{i})$ for $i\in [n]$. The vector $\varepsilon$ is random noise with mean zero.
{\oldnew As it is standard in the theory of high-dimensional statistics, our results are based on an event
\[
\mathcal{T}_{\tuning}\redef\Big\{\frac{4(2-\gamma)}{n\gamma}\norm{X^\T \varepsilon}_{\infty} \leq \tuning\Big\}\qquad (\tuning>0)\,.
\]
Terms such as $\frac{4(2-\gamma)}{n\gamma}\norm{X^\T \varepsilon}_{\infty}$ are sometimes called the ``effective noise''~\cite{wainwright2014structured} and can often be bounded by standard empirical process theory~\cite{buhlmann2011statistics}.
Essentially, the event states that heavy noise ($\varepsilon$ large)  requires strong penalization ($\tuning$ large).
\addtocounter{footnote}{0}\let\thefootnote\svthefootnote
For the technical proofs, we also
 assume $\tuning\leq \gamma c_{\min}^2/(100(2-\gamma)s\cmax)$ in the remainder,\footnote{On a high level, $\gamma c_{\min}^2/(100(2-\gamma)s\cmax)\sim 1$  and ${4(2-\gamma)}\norm{X^\T \varepsilon}_{\infty}/(n\gamma)\sim 1/\sqrt n $. Thus, $\gamma c_{\min}^2/(100(2-\gamma)s\cmax)\gg {4(2-\gamma)}\norm{X^\T \varepsilon}_{\infty}/(n\gamma).$ Since the right-hand side of this relation is basically the optimal tuning parameter targeted in our study, see the next section, the much larger upper bound on $\tuning$ has no impact on our analysis. For details, in particular on the constants, we refer to the proofs section.}
 and for ease of presentation, we  set}
 \[
     \ratio=\opnorm{(X_{S}^\T\deriva X_{S})^{-1}}_{\infty}/
\opnorm{(X_{S}^\T\deriva X_{S})^{-1}}_{2}\,.
\]
 Then, we find the following result.
\begin{theorem}[$\ell_{\infty}$-bound and support recovery]
 \label{Thm:SR}
  Under Assumption~\ref{Ass:BEC} and~\ref{Ass:MIC},
  the following properties hold on the event $\mathcal{T}_{\tuning}$.
  \begin{enumerate}
    \item[(a)] $\ell_{\infty}$-bound: $\norm{\estipara_{\tuning}-\truepara}_{\infty}\leq 1.5\ratio\tuning/c_{\min}$\,;
    \item[(b)] support recovery: $\textnormal{supp}(\estipara_{\tuning})\subset S$, and $\textnormal{supp}(\estipara_{\tuning}) = S$ if $\min_{j \in S}|\trueparaj|>1.5\ratio\tuning/c_{\min}$\,.
  \end{enumerate}
\end{theorem}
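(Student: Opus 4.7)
I follow a primal-dual witness (PDW) construction in the spirit of \cite{ravikumar2010high}. I first construct a candidate pair $(\tilde\beta,\tilde\nu)$ by setting $\tilde\beta_{S^c}=0$ and letting $\tilde\beta_S$ solve the restricted program $\min_{\beta_S}\{L(\beta_S,0)+\tuning\norm{\beta_S}_1\}$, with $\tilde\nu_S\in\partial\norm{\tilde\beta_S}_1$ satisfying the KKT condition on $S$; then I define $\tilde\nu_{S^c}$ from the $S^c$-stationarity equation $\nabla_{S^c}L(\tilde\beta)+\tuning\tilde\nu_{S^c}=0$. Once I verify strict dual feasibility $\norm{\tilde\nu_{S^c}}_\infty<1$, the standard PDW uniqueness argument identifies $\tilde\beta$ with the minimiser $\estipara_{\tuning}$ and yields $\supp(\estipara_{\tuning})\subset S$.

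\textbf{Linking $\tilde\beta$ to $\truepara$.} A second-order Taylor expansion of $\beta\mapsto\nabla L(\beta)$ around $\truepara$ gives $\nabla L(\tilde\beta)-\nabla L(\truepara)=H(\tilde\beta-\truepara)+R$, where $H=X^\T\deriva X/n$ is the Hessian at $\truepara$ and $R$ is a nonlinear remainder controlled by the third derivatives of $L$, the bound $|X_{ij}|\le\mconst$, and $\norm{\tilde\beta-\truepara}_2$. Using $\nabla L(\truepara)=-X^\T\varepsilon/n$, the fact that $\tilde\beta_{S^c}=\trueparasc=0$, and the restricted KKT equation $\nabla_S L(\tilde\beta)=-\tuning\tilde\nu_S$, the $S$-block reduces to
\[
H_{SS}(\tilde\beta_S-\trueparas)=\tfrac{1}{n}X_S^\T\varepsilon-\tuning\tilde\nu_S-R_S.
\]
Inverting $H_{SS}$, taking $\ell_\infty$-norms, and using $\opnorm{(X_S^\T\deriva X_S/n)^{-1}}_\infty=\ratio\opnorm{(X_S^\T\deriva X_S/n)^{-1}}_2\le\ratio/c_{\min}$ together with $\norm{X^\T\varepsilon}_\infty/n\le\tuning\gamma/(4(2-\gamma))$ on $\mathcal{T}_\tuning$, I obtain $\norm{\tilde\beta_S-\trueparas}_\infty\le(\ratio/c_{\min})(\tuning+\tuning/4+\norm{R_S}_\infty)$. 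A bootstrap argument that exploits the standing restriction on $\tuning$ then controls $\norm{R_S}_\infty$ by a small fraction of $\tuning$ and yields the bound $1.5\ratio\tuning/c_{\min}$ in part (a).

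\textbf{Strict dual feasibility and support recovery.} Substituting the expression for $\tilde\beta_S-\trueparas$ into the $S^c$-stationarity equation gives
\[
\tuning\tilde\nu_{S^c}=-H_{S^cS}H_{SS}^{-1}\Bigl(\tfrac{1}{n}X_S^\T\varepsilon-\tuning\tilde\nu_S-R_S\Bigr)+\tfrac{1}{n}X_{S^c}^\T\varepsilon-R_{S^c}.
\]
The irrepresentability bound $\opnorm{H_{S^cS}H_{SS}^{-1}}_\infty\le 1-\gamma$, combined with $\norm{\tilde\nu_S}_\infty\le 1$ and the noise bound on $\mathcal{T}_\tuning$, produces $\norm{\tilde\nu_{S^c}}_\infty\le(1-\gamma)+\gamma/4+\text{remainder}$. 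The same bootstrap from the previous step forces the remainder below $\gamma/4$, so $\norm{\tilde\nu_{S^c}}_\infty<1$; this establishes $\estipara_{\tuning}=\tilde\beta$ and hence $\supp(\estipara_{\tuning})\subset S$. Finally, if $\min_{j\in S}|\trueparaj|>1.5\ratio\tuning/c_{\min}$, part (a) forces the $j$-th coordinate of $\estipara_{\tuning}$ to be nonzero for every $j\in S$, giving $\supp(\estipara_{\tuning})=S$.

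\textbf{Main obstacle.} The delicate step is controlling the cubic remainder $R$ uniformly on a neighbourhood of $\truepara$, because $R$ is itself a function of $\tilde\beta-\truepara$, which is what one is trying to bound. This self-consistency issue is typical of PDW analyses of generalised linear models; it is precisely why the upper bound $\tuning\le\gamma c_{\min}^2/(100\mconst(2-\gamma)s\cmax)$ is imposed, as it forces the sparsity-scaled third-order terms to contribute only a small multiple of $\tuning$ and thereby closes both the $\ell_\infty$- and the dual-feasibility bootstraps.
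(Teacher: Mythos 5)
Your proposal is correct and follows essentially the same route as the paper: a primal--dual witness construction with a Taylor expansion of the gradient around $\truepara$, strict dual feasibility via the irrepresentability condition, and inversion of the $S$-block to get the $\ell_\infty$-bound. The ``bootstrap'' you invoke to control the nonlinear remainder is exactly what the paper packages into its Lemmas~A1 and~A2 (an $\ell_2$-consistency bound for the restricted estimator, borrowed from Ravikumar et al., which is then converted into the bound $\norm{X^\T r/n}_\infty\leq \tuning\gamma/(4(2-\gamma))$ under the standing restriction on $\tuning$).
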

\noindent Oracle inequalities are the standard way to state finite sample bounds in high-dimensional statistics~\citep{buhlmann2011statistics}. Similar results for $\ell_1$-penalized logistic/linear regression have also been derived elsewhere~\citep{bunea2008honest,chichignoud2016practical,Karim08,ravikumar2010high}, but the above formulation is particularly useful for our purposes.  Part~(a) implies that for a suitable tuning parameter~$\tuning$, the  estimator~$\estipara_{\tuning}$ is uniformly close to the regression vector~$\truepara$. Part~(b) implies that for suitable tuning parameter, the estimator $\textnormal{supp}(\estipara_{\tuning})$ provides exact support recovery if the non-zero parameters are sufficiently large. As long as the design assumptions are met, Theorem~\ref{Thm:SR} thus ensures that the family~\eqref{eqn:SLR} contains a viable estimator.

\subsection{Testing-based calibration}\label{subsec:AV}

Theorem~\ref{Thm:SR} ensures that the family~\eqref{eqn:SLR} contains an accurate estimator. This leaves us with two  tasks: (i)~We have to formulate a {\oldnew notion} of optimality within the family~\eqref{eqn:SLR}. In other words, we have to define what an optimal tuning parameter is. (ii)~We have to formulate a scheme to find an optimal tuning parameter from data.

To address these two tasks, we develop an approach that relates to the AV-testing idea introduced by~\cite{chichignoud2016practical}. The AV-tests have been developed for linear regression, which differs from logistic regression both in theory and implementations. For example, $\ell_\infty$-bounds in linear regression can be established by ``standard'' proof techniques, while the bounds needed here are based on the more recent Primal-Dual Witness technique. However, a more interesting, and quite striking insight here is that  the high-level arguments transfer from the linear to the non-linear setting - and even beyond. The following discussion can thus be read as a general blueprint for feature selection calibration, while the parts specific to logistic regression, namely the proof techniques and details, are deferred to the Appendix.

Let us first define the concept of oracle tuning parameters. Since one can handle only finitely many values in practice, we consider a fixed but arbitrary  sequence $ 0< \tuning_1 < \cdots < \tuning_N$ of tuning parameters and denote  the corresponding set by $\Lambda\redef \{\tuning_1,\cdots,\tuning_N\}$.
In view of Theorem~\ref{Thm:SR}, an optimal tuning parameter satisfies two requirements.  On the one hand, the bounds hold only on the  event~$\mathcal{T}_{\tuning}$. Thus, an optimal tuning parameter needs to ensure that the event~$\mathcal{T}_{\tuning}$ holds with high probability. On the other hand, the bounds are linear in~$\tuning$. Thus, an optimal tuning parameter should be as small as possible.  We formalize this notion as follows:
      \begin{definition}[Oracle tuning parameter]
     Given $\delta\in (0,1)$, the oracle tuning parameter is
     \begin{equation}
       \tuning_{\delta}^*\redef \argmin_{\tuning \in \Lambda}\{\Pr(\mathcal{T}_{\tuning})\geq 1-\delta\}\,.
     \end{equation}
   \end{definition}
\noindent Since the set $\Lambda$ is finite, the oracle tuning parameter is always well-defined. {\oldnew It is also small in large samples: in particular, since the residuals $\varepsilon$ in $\mathcal{T}_{\tuning}$ are bounded, standard concentration results ensure that  $\lambda^*_\delta\to 0$ for $n\to\infty$ and $\delta$ fixed---as long as $\log p/n\to 0$ for $n\to\infty$.}

We call the optimal tuning parameter ``oracle tuning parameter'' to signify that it is  a purely theoretical quantity and  cannot be used in applications. First, $\tuning_{\delta}^*$ depends on $\gamma,$ which is unknown in practice. Second, even if $\gamma$ were known, a precise evaluation of $\tuning_{\delta}^*$ would be computationally intensive. Finally, it is unclear how to choose $\delta.$  We thus aim at finding a data-driven selection rule that mimics the performance of the optimal tuning parameter. The following tests  provide this.
   \begin{definition}[Testing-based calibration]
     Given a constant $\constA\geq 1.5\ratio/ c_{\min}$, we select the tuning parameter
     \begin{equation}\label{eqn:AV}
     \hat{\tuning}\redef \min\Big\{\tuning\in \Lambda\ :\
     {\norm{\estipara_{\tuning'}-\estipara_{\tuning''}}_{\infty}}\leq \constA\tuning'+\constA\tuning''~~\forall\, \tuning',\tuning''\geq \tuning \Big\}
     \end{equation}
and set
\begin{equation}\label{eset}
  \hat S=\{j\in[p]:|(\estipara_{\hat\tuning})_j|\geq 3\constA\hat\tuning\}\,.
\end{equation}
   \end{definition}
\noindent {\oldnew An intuition goes as follows:
  suppose that $\tuning$ is large enough to control the random noise, that is, ${4(2-\gamma)}\norm{X^\T \varepsilon}_{\infty}/({n\gamma}) \leq \tuning$. Then, also $\tuning',\tuning''\geq \tuning$ are large enough, and Theorem~\ref{Thm:SR} ensures that $\norm{\estipara_{\tuning'}-\truepara}_{\infty}\leq \constA\tuning'$ and $\norm{\estipara_{\tuning''}-\truepara}_{\infty}\leq \constA\tuning''$.
  Combining these two inequalities with the help of the triangle inequality shows that $\norm{\estipara_{\tuning'}-\estipara_{\tuning''}}_{\infty}\leq \constA (\tuning'+\tuning'')$ is a necessary condition for $\lambda$ being large enough.
  What we now want is the smallest one among such ``large enough'' tuning parameters.
  This motivates us to select $\hat\tuning$ as the smallest $\tuning$ that satisfies $\norm{\estipara_{\tuning'}-\estipara_{\tuning''}}_{\infty}\leq \constA (\tuning'+\tuning'')$ for all $\tuning',\tuning''\geq \tuning$.
  This selection is then indeed ``conservative'' ($\hat\tuning\leq \tuning_{\delta}^*$), but it is also optimal in the sense of  Statement~\eqref{eset}.
}

Two features of our testing-based scheme are apparent immediately: First, the method is computationally efficient, because it requires at most one pass of the tuning parameter path. This path can be computed by standard algorithms such as~\texttt{glmnet}~\citep{Friedman09}. Since the structure of the tests allows for early stopping, the computation of a part of the tuning parameter path is actually sufficient. Second, the method is easy to implement because it consists of simple $\ell_\infty$-tests along the tuning parameter path. The tests also highlight the close connections between $\ell_\infty$-estimation and our final goal, support recovery.

The third feature of our scheme is that it is equipped with optimal finite sample theoretical guarantees. We establish this in the following result.
   \begin{theorem}[Optimality of the testing-based calibration]
   \label{Thm:AV}
     Under Assumption~\ref{Ass:BEC} and \ref{Ass:MIC}, for any $\delta\in (0,1)$ and $\constA\geq 1.5 \ratio/c_{\min}$, the tuning parameter~$\hat{\tuning}$ from  \eqref{eqn:AV} provides with probability at least $1-\delta$
 \[\hat{\tuning}\leq \tuning_{\delta}^* \quad \text{and} \quad
 \norm{\estipara_{\hat{\tuning}}-\truepara}_{\infty}\leq 3\constA\tuning_{\delta}^*\,,
\]
and, if $\min_{j \in S}|\trueparaj|>6\constA\tuning_{\delta}^*,$
     \begin{equation*}
       \hat S \supset S\,.
     \end{equation*}
   \end{theorem}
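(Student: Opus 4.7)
The plan is to work entirely on the event $\mathcal{T}_{\tuning_\delta^*}$, which has probability at least $1-\delta$ by the definition of $\tuning_\delta^*$. The crucial structural observation is that the events $\mathcal{T}_\tuning$ are nested increasing in $\tuning$: since $\mathcal{T}_\tuning=\{4(2-\gamma)\|X^\T\varepsilon\|_\infty/(n\gamma)\leq \tuning\}$, any realization in $\mathcal{T}_{\tuning_\delta^*}$ automatically lies in $\mathcal{T}_\tuning$ for every $\tuning\geq\tuning_\delta^*$. Applying Theorem~\ref{Thm:SR}(a) to each such $\tuning\in\Lambda$ and using the hypothesis $\constA\geq 1.5\ratio/c_{\min}$, I would obtain on this event the uniform bound $\|\estipara_\tuning-\truepara\|_\infty\leq 1.5\ratio\tuning/c_{\min}\leq \constA\tuning$ simultaneously for all $\tuning\in\Lambda$ with $\tuning\geq\tuning_\delta^*$.

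First I would prove $\hat\tuning\leq\tuning_\delta^*$ by showing that $\tuning_\delta^*$ itself satisfies the selection criterion in \eqref{eqn:AV}. For any $\tuning',\tuning''\in\Lambda$ with $\tuning',\tuning''\geq\tuning_\delta^*$, the triangle inequality combined with the uniform bound above yields
\[
\|\estipara_{\tuning'}-\estipara_{\tuning''}\|_\infty \leq \|\estipara_{\tuning'}-\truepara\|_\infty+\|\truepara-\estipara_{\tuning''}\|_\infty \leq \constA(\tuning'+\tuning'')\,,
\]
so the max inside \eqref{eqn:AV} is nonpositive at $\tuning=\tuning_\delta^*$; minimality then forces $\hat\tuning\leq\tuning_\delta^*$.

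To bound the estimator at $\hat\tuning$, I would pivot through $\tuning_\delta^*$. Since $\hat\tuning\leq\tuning_\delta^*$, both values are $\geq\hat\tuning$, and the criterion defining $\hat\tuning$ applied to the pair $(\hat\tuning,\tuning_\delta^*)$ gives
\[
\|\estipara_{\hat\tuning}-\estipara_{\tuning_\delta^*}\|_\infty \leq \constA(\hat\tuning+\tuning_\delta^*) \leq 2\constA\tuning_\delta^*\,.
\]
Combining with $\|\estipara_{\tuning_\delta^*}-\truepara\|_\infty\leq \constA\tuning_\delta^*$ from Theorem~\ref{Thm:SR}(a) via the triangle inequality produces $\|\estipara_{\hat\tuning}-\truepara\|_\infty\leq 3\constA\tuning_\delta^*$. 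For support recovery, the beta-min hypothesis $\min_{j\in S}|\trueparaj|>6\constA\tuning_\delta^*$ together with the reverse triangle inequality would give, for each $j\in S$,
\[
|(\estipara_{\hat\tuning})_j|\geq |\trueparaj|-3\constA\tuning_\delta^* > 3\constA\tuning_\delta^* \geq 3\constA\hat\tuning\,,
\]
placing $j$ inside $\hat S$ and establishing $\hat S\supset S$.

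The main conceptual hurdle I anticipate is that $\hat\tuning$ can be strictly less than $\tuning_\delta^*$, so $\mathcal{T}_{\hat\tuning}$ need not hold and Theorem~\ref{Thm:SR}(a) is not directly usable at $\hat\tuning$. The purpose of the AV-criterion is precisely to bridge this gap: it forces $\estipara_{\hat\tuning}$ to remain close to $\estipara_{\tuning_\delta^*}$, and the latter is exactly where Theorem~\ref{Thm:SR}(a) applies. Once this pivot is set up, everything else reduces to routine triangle-inequality bookkeeping.
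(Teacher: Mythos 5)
Your proposal is correct and follows essentially the same route as the paper's proof: working on the event $\mathcal{T}_{\tuning_\delta^*}$, exploiting the nesting of the events $\mathcal{T}_\tuning$ to apply Theorem~\ref{Thm:SR} uniformly for $\tuning\geq\tuning_\delta^*$, pivoting through $\estipara_{\tuning_\delta^*}$ to get the $3\constA\tuning_\delta^*$ bound, and concluding $\hat S\supset S$ from the beta-min condition. The only cosmetic difference is that you establish $\hat\tuning\leq\tuning_\delta^*$ directly by verifying that $\tuning_\delta^*$ passes the test, whereas the paper phrases the identical argument as a proof by contradiction.
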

\noindent Let us highlight some aspects of this result: First, all results are stated for fixed $n,p$, and all constants are specified. The bounds are thus finite sample bounds that can provide, as opposed to asymptotic bounds, concrete insights into the practical performance of the method.  Next, the guarantees hold for any $\gamma$ and~$\delta,$  but these quantities do  {\it not} need to be specified in the method. Similarly, the results hold irrespective of the set~$\Lambda,$ in particular, irrespective of the number of tuning parameters~$N$. The set~$\Lambda$ enters the results only  through~$\tuning_{\delta}^*$: the finer the grid~$\Lambda$, the more precise the optimal tuning parameter~$\tuning_{\delta}^*, $ and thus, the sharper the guarantees. Furthermore, the $\ell_\infty$-bounds demonstrate the optimality of the method. Indeed,  the estimator with optimal, in practice unknown tuning parameter satisfies $\norm{\estipara_{\tuning_{\delta}^*}-\truepara}_{\infty}\leq 1.5\ratio\tuning_{\delta}^*/c_{\min}$, see Theorem~\ref{Thm:SR}. The bound for the estimator with the data-driven tuning parameter~$\hat\tuning$ equals this bound - up to a constant factor~3. Finally, since Definition~\eqref{eset} contains a threshold, which is based on the guarantee $\hat{\tuning}\leq \tuning_{\delta}^*,$  the number of false positives is typically small. Yet, the second part of the theorem ensures that $\hat S$ contains all sufficiently large predictors, which means that also the number of false negatives is typically small.  Theorem~\ref{Thm:AV} thus provides accurate feature selection guarantees for the testing-scheme.  We are not aware of any comparable feature selection (or $\ell_\infty$-) guarantees for standard calibration schemes.

{\oldnew
\begin{remark}[The constant $\constA$ in practice]
 \revision{The optimal value of $\constA$ in view of the theoretical bounds is $\constA=1.5\ratio/ c_{\min}$.
As described above, support recovery with~\eqref{eqn:SLR} is not possible in highly correlated settings, and it has been pointed out that large $\beta^*$ can be problematic for $\ell_1$-penalized methods more generally~\citep{dalalyan2017prediction}.
Assuming near-orthogonal design and small parameter values in the sense of $\norm{\beta^*}_2\approx 0$, so that $\ratio\approx 1$, $c_{\min}\approx 1/4$, we find  $1.5 \ratio/c_{\min}\approx 6$.
This suggests that an appropriate choice is $\constA=6$, and we adopt this choice throughout this paper.

The assumption of near-orthogonal design and small parameter values might be unrealistic in practice,
but the empirical studies in Section~\ref{sec:sim} and the Appendix demonstrate good performance of $\constA=6$ even when the model deviates substantially from this assumption.

In any case, the choice of $\constA$ remains a subject for further study.
For example, the limitations of the heuristics for $\constA$  might simply be an artifact of using the $\ell_1$-regularizer:
there might be estimators different from~\eqref{eqn:SLR} for which the application of our calibration scheme leads to constants~$\constA$ that can be theorized more globally.
}
\end{remark}
To summarize, the proposed testing-based method accurately mimics the performance of the optimal tuning parameter, and yet, it is computationally efficient and does not depend on the quantities~$\gamma$ and~$\delta.$ Moreover, the parameter~$\constA$ can be set to a universal constant; in particular, $\constA$ does not require calibration. The simulation results below indicate that indeed no further calibration is required. The testing-based scheme is thus a practical scheme with a sound theoretical foundation.
}
\section{Simulation studies}\label{sec:sim}

In this section, we show the practical performance of the proposed scheme in a simulation study. We simulate data from the logistic regression model~\eqref{eqn:LR} with $n = 200$ samples and $p\in\{200, 500\}$ predictors. The row vectors $x_{i}$ of the design matrix $X$ are i.i.d. Gaussian  with mean zero and covariance $\Sigma = (1-\kappa)\textnormal{I} + \kappa \textbf{1}$, where $\textnormal{I}$ is the identity matrix, $\textbf{1}$ is the matrix of all ones, and $\kappa\in \{0, 0.25, 0.5, 0.75\}$ the level of correlations among the predictors. The settings with larger correlations are particularly interesting, because they violate the strict design assumptions. The coordinates of the regression vector~$\truepara$ are set to zero except for $s\in \{8,12,15{\oldnew , 20, 25}\}$ uniformly at random chosen entries that are set to $1$ or $-1$ with equal probability. While our theory holds for any tuning parameter grid, we consider $N=500$ tuning parameters that are equally spaced on $[\tuning_{1},\tuning_{N}]$, where $\tuning_{1}=0.0001\tuning_{N}$ and $\tuning_{N}=10\log(p)/n$ ensure a large spread of outcomes. For each of the total {\oldnew $40$} settings,  we report the means over  $200$~replications. The methods under consideration are the testing-based method defined in~\eqref{eqn:AV} and~\eqref{eset},  \BIC, {\oldnew \EBIC}, 10-fold \CV, and \AIC. {\oldnew The \EBIC~is the
classical \BIC~with an additional penalty term $2\theta\log(p)$ with a positive $\theta$. Here, we choose $\theta\in\{0.25,0.5,1\}$. Note that the \BIC~is a special form of \EBIC~with $\theta=0$.}
No thresholding is applied for the standard methods, since there is no guidance on the choice of such a threshold. All computations are conducted with the software \texttt{R}~\citep{Rsoftware} and the \texttt{glmnet} package.

Since our goal is support recovery, we compare the methods in terms of  Hamming distance, which is the sum of {the number of} false positives and false negatives. Figure~\ref{fig:hdf} contains the results for {\oldnew $\kappa\in\{0.25,0.5\}$ and $s\in\{8,12,15\}$}.
\begin{figure}[h!]
\centering
       \subfigure[~ $p = 200$, $\kappa = 0.25$]{
  \includegraphics[width=0.45\textwidth]{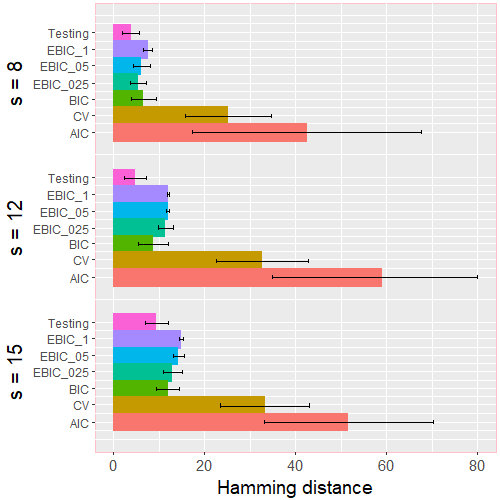}
  }
  ~~~~
\subfigure[~ $p = 200$, $\kappa = 0.5$]{
  \includegraphics[width=0.45\textwidth]{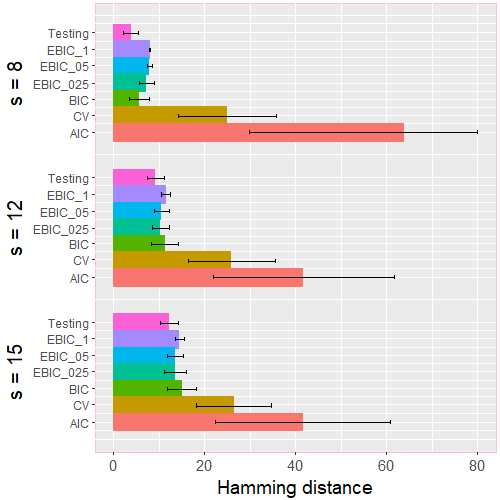}
  }
  \vspace{6mm}
 \subfigure[~ $p = 500$, $\kappa = 0.25$]{{
  \includegraphics[width=0.45\textwidth]{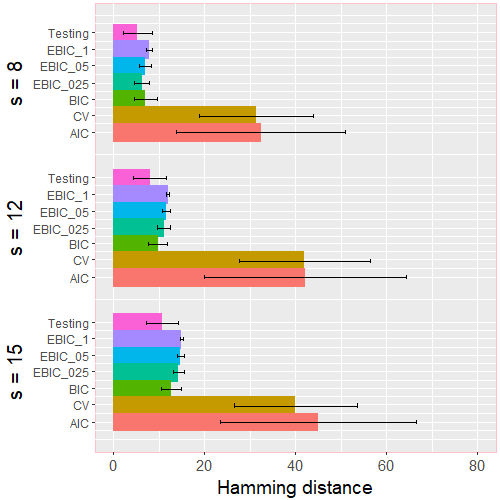}
  }}
  ~~~~
  \subfigure[~ $p = 500$, $\kappa = 0.5$]{{
  \includegraphics[width=0.45\textwidth]{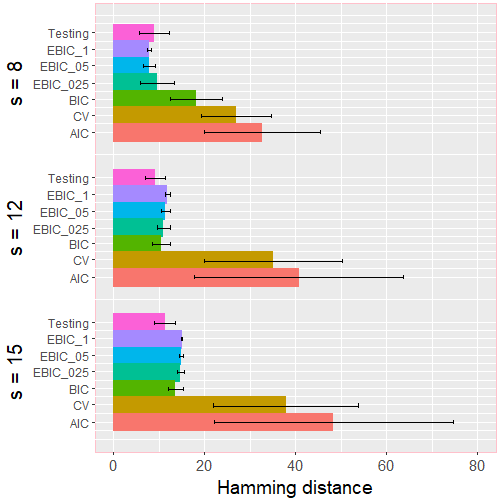}
  }}
  \caption{Variable selection errors of $\ell_1$-regularized logistic regression with {\oldnew seven} different calibration schemes for the tuning parameter. The $12$ simulation settings differ in the number of variables $p$, correlation $\kappa$, and sparsity level~$s$.}\label{fig:hdf}
\end{figure}
{\oldnew In each plot of Figure~\ref{fig:hdf}, we use $\EBIC\_1$, $\EBIC\_05$ and  $\EBIC\_025$ to denote
 the EBIC method with $\theta=1,0.5$ and 0.25, respectively.}
 The results for the other correlation {\oldnew and sparsity} levels are deferred to the Appendix. The results allow for two observations: First, \BIC\ {\oldnew and \EBIC}\ consistently outperform \CV\ and \AIC. This is no surprise, given that \BIC\ {\oldnew and \EBIC\ are} specifically designed for feature selection. Second, our testing-based scheme rivals \BIC\ {\oldnew and \EBIC}\ across all settings.

\BIC, {\oldnew  \EBIC}\ and \AIC\ require one complete pass of the tuning parameter path. 10-fold \CV\ requires one complete pass of 10 tuning parameter paths and thus, requires about 10 times more computational power (or parallelization). The testing-based scheme is the most efficient approach: it requires at most one complete pass of the tuning parameter path, and typically even less, because it stops  as soon as the tuning parameter is selected. For illustration, Figure~\ref{fig:timef} summarizes the run times for six settings with $\kappa=0.5$; additional results with larger correlations {\oldnew and more dense signals} are provided in the Appendix.
 \begin{figure}
\centering
       \subfigure[~ $p = 200$, $\kappa = 0.5$]{
  \includegraphics[width=0.45\textwidth]{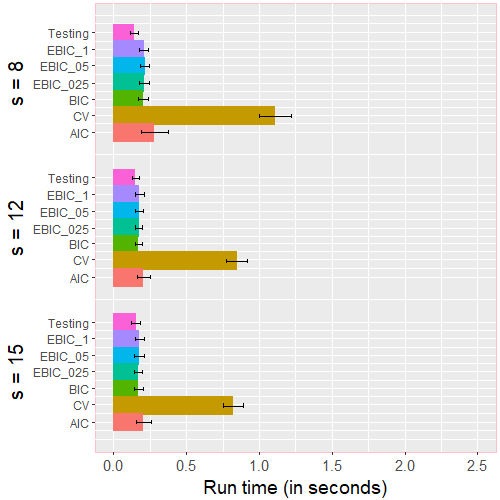}
  }
  ~~~~
\subfigure[~ $p = 500$, $\kappa = 0.5$]{
  \includegraphics[width=0.45\textwidth]{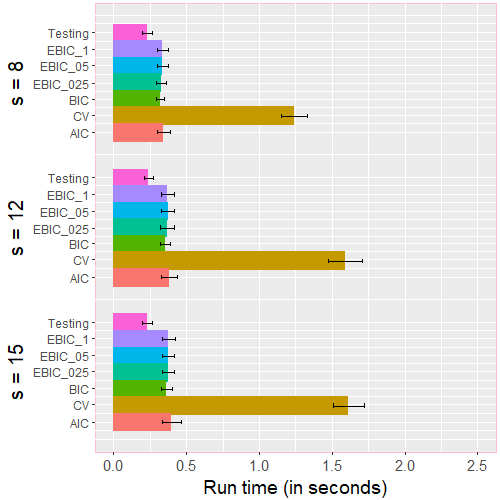}
  }
  \caption{Run times (in seconds) of $\ell_1$-regularized logistic regression with {\oldnew seven} different calibration schemes for the tuning parameter. Depicted are the results for $p \in\{ 200, 500\}$ and $\kappa= 0.5$.}\label{fig:timef}
\end{figure}

\section{Real data analysis}\label{sec:rd}

In this section, we apply the proposed scheme to biological data. We consider three data sets:
\begin{enumerate}
  \item[a)]{Gene expression data from a leukemia microarray study~\citep{golub1999molecular}. The data comprises $n = 72$ patients; $25$ patients with acute myeloid leukemia  and $47$ patients with acute lymphoblastic leukemia. The predictors are  the expression levels of  $p=7129$ genes. The data is summarized in the \texttt{R}~package \texttt{golubEsets}. The goal is to select the genes whose expression levels discriminate between the two types of leukemia.}
\item[b)]{The above data with the additional preprocessing and filtering  described in~\citep{dudoit2002comparison}. This reduces the number of genes to $p = 3571$. The data is summarized in the \texttt{R}~package \texttt{cancerclass}.}
\item[c)]{Proteomics data from a melanoma study~\citep{mian2005serum}.
 The data comprises $n=205$ patients; $101$ patients with stage~I (moderately severe) melanoma and $104$ patients with stage~IV (very severe) melanoma. The raw data contains the  intensities of $18'856$ mass-charge ($m/z$) values measured in the patients' serum samples. We apply the preprocessing described in~\citep{vasiliu2018penalized}, which results in $p=500$ $m/z$ values, and we subsequently normalize the data. The goal is to select the $m/z$ values whose intensities  discriminate between the two melanoma stages.}
\end{enumerate}

The objective of our method is feature selection. However, since there are no ground truths available for the above applications, we cannot measure feature selection accuracy directly. Instead, we need to infer the method's performance from the number of selected predictors and the prediction accuracy. We generally seek methods that yield a model with a small number of predictors (easy to interpret) and small prediction errors (good fit of the data). Moreover, an increase in prediction accuracy through refitting indicates well-estimated supports, while a deterioration in prediction accuracy through refitting indicates false negatives or false positives. We thus report the model sizes and the prediction errors of Leave-One-Out Cross-Validation without (LOOCV) and with  refitting (LOOCV-refit). Typically, no method is simultaneously dominating in all measures, so that one needs to weight the two aspects according to the objective. For example, the model size is sometimes considered secondary when the goal is prediction, but it is a crucial factor for support recovery.

We apply the four different methods as described in the previous section.
\begin{table}[h!]
\centering
\def~{\hphantom{0}}
\caption{Means and standard deviations of the model sizes and of the misclassification rates for Leave-One-Out Cross-Validation without and with refitting.\vspace{1mm}}{

{
\begin{tabular}{lccc}
 Method&  Model size&LOOCV  & LOOCV-refit \vspace{1mm}\\
 \vspace{2mm}
& \multicolumn{3}{c}{a) Gene expression data with $p=7129$ genes}  \vspace{-1.7mm} \\
 Testing& ~4.35 (1.36) & 0.153 (0.362)  & 0.111 (0.316) \\
 \BIC& ~5.03 (2.78)  &  0.181 (0.387)& 0.111 (0.316) \\
 \CV& 25.75 (3.25) &0.056 (0.231)  & 0.042 (0.201) \\
\AIC& 20.36 (3.07) & 0.069 (0.256) &0.069 (0.256)\vspace{3mm}\\
 \vspace{2mm}
& \multicolumn{3}{c}{b) Gene expression data with $p=3571$ genes}\vspace{-1.7mm}\\
  Testing& ~4.42 (1.39) & 0.167 (0.375)  & 0.125 (0.333)   \\
 \BIC& ~4.99 (2.73)  &  0.194 (0.399)& 0.139 (0.348) \\
 \CV& 25.28 (2.91) &0.056 (0.231)  & 0.069 (0.256) \\
\AIC& 20.17 (3.41)  & 0.083 (0.278) &0.056 (0.231)\vspace{3mm}\\
 \vspace{2mm}
& \multicolumn{3}{c}{c) Proteomics data with $p=500$ $m/z$ values} \vspace{-1.7mm}\\
  Testing& ~1.00 (0.00) & 0.205 (0.405)  & 0.205 (0.405)  \\
 \BIC&
 13.84 (1.26)  &  0.117 (0.322)& 0.117 (0.322)  \\
 \CV& 23.52 (3.22) &0.117 (0.322) & 0.195 (0.397) \\
\AIC& 26.86 (5.38) & 0.122 (0.328) &0.185 (0.390)
\end{tabular}}
}
\label{tab:data}
\end{table}
The results are summarized in Table~\ref{tab:data}. We observe that  the methods form two clusters: On the one hand, \CV\ and \AIC\ provide the most accurate predictions. On the other hand, \BIC\ and the testing-based approach select considerably smaller models and show a larger increase in accuracy after refitting. This is expected, in view of \CV\ and \AIC\ being designed for prediction, and \BIC\ and testing being designed for feature selection. For an ``in-cluster'' comparison, we focus at the testing-based method and \BIC. In the first two data examples, the testing-based method is  dominating \BIC, because it provides more accurate prediction with smaller models. In the third example, \BIC\ is more accurate in prediction, but the testing-based approach provides reasonable prediction (compare especially with \CV\ and \AIC\ after refitting) with only one variable.

\section{Discussion}\label{sec:dis}

We have introduced a scheme for the calibration of $\ell_1$-penalized likelihood for feature selection in logistic regression.  A distinctive feature of the approach  are its theoretical guarantees. Indeed, the new method satisfies optimal finite sample bounds, while for existing methods, the available theory is limited to asymptotic results - or there is no theory at all. Given that in applications, sample sizes are always finite, only  finite sample  theory can provide concrete guidance for practitioners. In addition to the theory, the scheme is easy to implement, computationally efficient, and competitive in simulations and real data applications.

Besides being of direct theoretical and practical value for logistic regression, our contribution also provides new insights into the testing ideas that have been developed in~\cite{chichignoud2016practical}. In particular, we think that its successful use in non-linear regression demonstrates the general potential of testing for tuning parameter calibration and is expected to spark further studies of testing ideas in other modeling frameworks.

A topic for further research are the design assumptions. The focus of this paper is feature selection, where strict assumptions on the correlations in $X$ cannot be avoided. However, it would be interesting to extend our approach  to tasks that are less sensitive to correlations, such as $\ell_2$-estimation and prediction.

\revision{
Another question is whether $\ell_1$-penalization is the right starting point.
In this paper, we introduce an approach to calibrate---in a sense optimally---$\ell_1$-penalized likelihood,
accepting all benefits just as well as all limitations of $\ell_1$-regularization.
It could well be that some theoretical limitations, such as the strict conditions on the design, could be alleviated by starting with a different family of estimators,
or different families could improve computational speed or practical performance.
For example, one could consider estimators with non-convex regularizers, such as SCAD \citep{fan2001variable} and MCP \citep{zhang2010nearly}, or adaptive lasso-type regularizers \citep{zou2006adaptive}, all of which have been shown to reduce bias if (arguably very stringent) conditions on the design matrix are met.
Comparing among different types of estimators is beyond the scope of our paper,
but we stress that our scheme principally applies to any family of estimators:
all it needs is a suitable oracle inequality.
So while this paper only discusses the calibration of $\ell_1$-regularized likelihood,
extensions to other estimators (that might suit a given application much better) appear to be in close reach.
}

\revision{This question is closely related to the choice of~$\constA$.
We found that $\constA=6$ leads to excellent empirical results across a wide range of settings,
while our theoretical justification for this choice requires strong assumptions on the design and the parameter vector.
It would be of interest to broaden the scope of the theoretical justification as well as to theorize the choice of $\constA$ for the application of our calibration method to estimators beyond the~$\ell_1$-regularized likelihood.
}

\section*{Acknowledgements}

{\oldnew We thank the editor, associate editor, and anonymous referees for their comments and suggestions, which helped us considerably in revising the paper.} Wei Li acknowledges support from the China Scholarship Council. We thank Haiyun Jin for his valuable contributions to the implementation of our method, and we thank Micha\"el Chichignoud and Xiao-Hua Zhou for the inspiring discussions and the insightful remarks.

\section*{References}

\bibliography{BIBTeX}

\appendix


\section{}
In this appendix, we provide proofs for our theoretical claims, and we present additional simulation results.
\subsection{Proofs for the theoretical claims}

We provide here proofs for Theorems~\ref{Thm:SR} and~\ref{Thm:AV}.
Throughout this section, we write $\rho(u, v)=\exp(u^\T v)/(1+\exp(u^\T v))$ for vectors $u,v$ of the same length. For ease of notation, we will suppress the subscript $\tuning$ at most instances.

The key quantities in the proofs are two vectors     $\primal=(\prims^\T,\primsc^\T)^\T\in\R^p$ and $\dual=(\duals^\T,\dualsc^\T)^\T \in \R^p$ constructed as follows:

    \begin{enumerate}
      \item define the primal subvector $\prims \in \R^{s}$ such that
       \begin{equation*}
         \prims\in \argmin_{\theta\in \R^{s}}\bigg\{\sum_{i=1}^n(\log(1+\exp(x_{i,S}^\T
       \theta))-y_ix_{i,S}^\T\theta)/n+\tuning\norm{\theta}_1\bigg\}\,;
       \end{equation*}
\item set $\primsc=0\in \R^{p-s}$;
     \item      define the dual vector $\dual\in \R^{p}$ via its elements
      \begin{equation*}
        \dualj\redef \sum_{i=1}^nX_{ij}(y_i-\rho(x_{i},\primal))/(n\tuning)\qquad\quad (j\in[p])\,.
      \end{equation*}
    \end{enumerate}

    The proofs of the theorems are based on three auxiliary lemmas. Figure~\ref{fig:dependence} depicts the dependencies.

\tikzstyle{block} = [rectangle, draw=black, fill=white, node distance=1.1cm, minimum height=2em, font=\color{black}\sffamily, rounded corners=5pt, text width=5em, text badly centered]
\tikzstyle{cloud} = [ellipse, draw=black, fill=white, node distance=1.1cm, minimum height=2em, font=\color{black}\sffamily,text width=4em, text badly centered]
\tikzstyle{line} = [draw, -latex']

 \begin{figure}[h!]
 \centering
 \begin{tikzpicture}[auto, node distance=1.5cm]
    \node [cloud] (A2) {Lemma~\ref{lem:L2norm}};
    \node [cloud, right = of A2] (A1) {Lemma~\ref{lem:PDW}};
    \coordinate (Middle) at ($(A2)!0.5!(A1)$);
     \node [cloud, below = of Middle] (A3) {Lemma~\ref{lem:Linfnorm}};
     \node [block, right = of A3] (Thm1) {Theorem~\ref{Thm:SR}};
     \node [block, right = of Thm1] (Thm2) {Theorem~\ref{Thm:AV}};
    \path [line] (A2)  -- (A3);
    \path [line] (A1) -- (Thm1);
     \path [line] (A3) -- (Thm1);
    \path [line] (Thm1) -- (Thm2);
\end{tikzpicture}
\caption{Dependencies among the lemmas and theorems. For example, the arrow between Lemmas~\ref{lem:L2norm} and~\ref{lem:Linfnorm} indicates that Lemma~\ref{lem:Linfnorm} relies on Lemma~\ref{lem:L2norm}.}
\label{fig:dependence}
\end{figure}
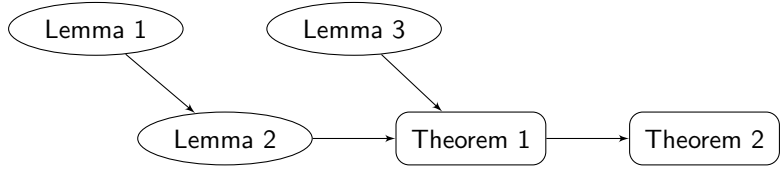
\begin{lemma}[$\ell_{2}$-bound for the primal subvector]
\label{lem:L2norm}
  If $\tuning \leq c_{\min}^2/ (10 sc_{\max})$
   and $\norm{X^\T\varepsilon/n}_{\infty} \leq \tuning/4$,
    then
  \[
  \norm{\prims-\trueparas}_2\leq \frac{5\tuning \surd{s}}{c_{\min}}\,.
  \]
\end{lemma}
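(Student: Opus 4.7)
The plan follows the classical Lasso-type analysis of the restricted objective, combined with a bootstrap/localization argument to handle the fact that the logistic loss is not quadratic. Define $L_S(\theta) := n^{-1}\sum_{i=1}^n[\log(1+\exp(x_{i,S}^\T\theta)) - y_i x_{i,S}^\T\theta]$ and $v := \prims - \trueparas$. First, optimality of $\prims$ in the restricted problem together with the reverse triangle inequality on the $\ell_1$-terms yields the basic inequality
\[
L_S(\prims) - L_S(\trueparas) \;\leq\; \tuning\|v\|_1.
\]

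Next, a second-order Taylor expansion with mean-value remainder gives, for some $\tilde\theta$ on the segment from $\trueparas$ to $\prims$, $L_S(\prims) - L_S(\trueparas) = \nabla L_S(\trueparas)^\T v + \tfrac12 v^\T H(\tilde\theta)v$, where $H(\theta) := X_S^\T \deriva(\theta)\, X_S/n$. Since $\trueparasc = 0$ implies $\nabla L_S(\trueparas) = -X_S^\T\varepsilon/n$, the noise hypothesis $\|X^\T\varepsilon/n\|_\infty \leq \tuning/4$ together with the $s$-sparsity of $v$ gives $v^\T H(\tilde\theta)v \leq (5\tuning/2)\|v\|_1 \leq (5\tuning\sqrt{s}/2)\|v\|_2$. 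A matching lower bound is obtained by writing $H(\tilde\theta) = H(\truepara) + (H(\tilde\theta) - H(\truepara))$: Assumption~\ref{Ass:BEC} contributes $v^\T H(\truepara)v \geq c_{\min}\|v\|_2^2$, and writing $\derivafun(x_i,\cdot) = f(x_{i,S}^\T\cdot)$ with $f(z) := e^z/(1+e^z)^2$ of bounded derivative, together with $|X_{ij}| \leq \mconst$ and $v^\T X_S^\T X_S v/n \leq \cmax\|v\|_2^2$, controls the perturbation by a multiple of $\mconst\cmax\sqrt{s}\|v\|_2^3$.

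Comparing the two bounds produces $c_{\min}\|v\|_2 - \mconst\cmax\sqrt{s}\|v\|_2^2 \leq 5\tuning\sqrt{s}/2$, from which the desired estimate $\|v\|_2 \leq 5\tuning\sqrt{s}/c_{\min}$ follows \emph{provided} $\|v\|_2$ is already known to lie in the neighborhood $\|v\|_2 \leq c_{\min}/(2\mconst\cmax\sqrt{s})$. This self-referential loop is the main obstacle: the Hessian lower bound is only useful inside a ball around $\trueparas$ whose radius depends on the very quantity we are trying to bound. I would close the loop with a standard continuity bootstrap along the segment $\theta_t := \trueparas + tv$, $t\in[0,1]$. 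Convexity of $L_S + \tuning\|\cdot\|_1$ implies that this objective is bounded by its value at $\trueparas$ along the entire segment, so the basic inequality transfers to each $\theta_t$ with $v$ replaced by $tv$, and the preceding steps apply verbatim. The set $T := \{t\in[0,1] : \|tv\|_2 \leq c_{\min}/(2\mconst\cmax\sqrt{s})\}$ contains $0$ and is closed, while the hypothesis $\tuning \leq c_{\min}^2/(10\mconst s\cmax)$ makes the derived bound $\|tv\|_2 \leq 5\tuning\sqrt{s}/c_{\min}$ strictly smaller than the threshold defining $T$, so $T$ is also open in $[0,1]$. Hence $T = [0,1]$, and evaluating at $t = 1$ delivers the claim.
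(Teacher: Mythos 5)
Your argument is correct in substance but takes a genuinely different route from the paper's. The paper gives no explicit proof: it defers to Lemma~3 of Ravikumar et al.\ (2010), whose mechanism is to fix the target radius $B=5\tuning\sqrt{s}/c_{\min}$ \emph{in advance}, show that the convex function $G(u)=L_S(\trueparas+u)-L_S(\trueparas)+\tuning\left(\norm{\trueparas+u}_1-\norm{\trueparas}_1\right)$ is positive everywhere on the sphere $\norm{u}_2=B$, and conclude from $G(0)=0$, $G(\prims-\trueparas)\leq 0$ and convexity that the minimizer must lie inside the ball. Evaluating on a sphere of prescribed radius makes the Hessian perturbation term controllable by $B$ itself, so the self-referential loop you describe never arises. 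You instead run the basic-inequality argument at the actual minimizer and then repair the resulting circularity with a continuity bootstrap along the segment $\theta_t$. Both routes use convexity as the localization engine and both deliver the same constants; the sphere argument is somewhat cleaner because localization is built in, while yours is closer to the standard Lasso analysis and makes the role of each hypothesis (noise level, eigenvalue bound, smallness of $\tuning$) more transparent.

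One wrinkle in your bootstrap: at the extreme value $\tuning=c_{\min}^2/(10\mconst s\cmax)$ the derived bound $5\tuning\sqrt{s}/c_{\min}$ \emph{equals} the localization threshold $c_{\min}/(2\mconst\cmax\sqrt{s})$ rather than being strictly smaller, so the claimed openness of $T$ in $[0,1]$ does not follow as stated, and the open-closed argument has a hole precisely at the boundary case. This is easily repaired: the function $f(z)=e^z/(1+e^z)^2$ has $\sup_z|f'(z)|=1/(6\surd 3)<1$, so the perturbation term carries a constant strictly less than one and the required inequality becomes strict with room to spare; alternatively, one can prove the bound for every $\tuning'<\tuning$ and let $\tuning'\uparrow\tuning$. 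Either fix closes the loop, but you should state one of them explicitly.
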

\noindent The proof of this lemma follows along the same lines as the proof of Lemma~3 in~\citep{ravikumar2010high}.

\vspace{3mm}
Lemma~\ref{lem:L2norm} implies that the primal subvector $\prims$ is $\ell_2$-consistent, which enables us to develop a Taylor series expansion of $\rho(x_{i,S},\prims)$ at $\trueparas$ according to
\begin{equation}\label{eqn:mvt}
  \rho(x_{i,S},\prims)-\rho(x_{i,S},\trueparas)=\derivafun(x_{i,S},\trueparas) x_{i,S}^\T (\prims-\trueparas)+r_i
\end{equation}
with remainder term
\[r_i=(\prims-\trueparas)^\T \int_0^1(\nabla^2\rho(x_{i,S}, \ \prims+t(\prims-\trueparas)))(1-t)\d t \ (\prims-\trueparas)\,.\]
The derivative reads explicitly
\begin{equation}\label{eqn:ri}
 \nabla^2\rho(x_{i,S},\ \prims+t(\prims-\trueparas))=\xi_i(t) x_{i,S} x_{i,S}^\T\,,
\end{equation}
where $\xi_i(t)=\exp(\eta_i(t))(1-\exp(\eta_i(t)))/
(1+\exp(\eta_i(t)))^3$ and $\eta_i(t)=x_{i,S}^\T (\prims+t(\prims-\trueparas))$
for $t\in[0,1]$ and $i\in [n]$. Summarizing $\res_1,\dots,\res_n$ from Equation~\eqref{eqn:mvt}  in the vector $\res=(r_1,\ldots,r_n)^{\T}$,
we can now state the following result.

\begin{lemma}[$\ell_{\infty}$-bound for the remainder term]
\label{lem:Linfnorm}
  If $\lambda \leq \gamma c_{\min}^2/(100(2-\gamma)sc_{\max})$ and $\norm{X^\T \varepsilon/n}_{\infty} \leq \tuning/4$, then
  \[
  \norm{X^\T \res/n}_{\infty}\leq \frac{\tuning \gamma}{4(2-\gamma)}\,.
  \]
\end{lemma}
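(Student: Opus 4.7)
The plan is to write each residual $r_i$ as a quadratic form, bound the Hessian factor by a bounded diagonal matrix, and then leverage Assumption~\ref{Ass:BEC} via the spectral norm of $X_S^\T X_S/n$. Combining this with the $\ell_2$-bound from Lemma~A\ref{lem:L2norm} will give the desired $\ell_\infty$-bound, and a careful use of the assumption on $\tuning$ will produce the stated constant. The key trick is to push the factor $X_{ij}$ into a diagonal matrix instead of bounding $|r_i|$ pointwise and then summing; otherwise one picks up an extra factor of $s$.

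First, by \eqref{eqn:ri}, each residual admits the representation
\[
r_i = (\prims-\trueparas)^\T\!\Big[\int_0^1 \xi_i(t)(1-t)\,x_{i,S}\,x_{i,S}^\T\,\d t\Big](\prims-\trueparas)\,.
\]
Next, for any $j\in[p]$, I would exchange sums and integrals to obtain
\[
(X^\T r/n)_j \;=\; \frac{1}{n}\sum_{i=1}^n X_{ij} r_i \;=\;(\prims-\trueparas)^\T\Big[\tfrac{1}{n} X_S^\T D_j X_S\Big](\prims-\trueparas)\,,
\]
where $D_j=\textnormal{diag}\{X_{ij}\int_0^1\xi_i(t)(1-t)\,\d t:i\in[n]\}$. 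Using the standard identity $\rho''(z)=\rho(z)(1-\rho(z))(1-2\rho(z))$ one has $|\xi_i(t)|\le 1/4$ (or, crudely, $\leq 1$), so together with $|X_{ij}|\leq \mconst$ we obtain the elementwise bound $\|D_j\|_\infty\leq \mconst\int_0^1(1-t)\,\d t=\mconst/2$ (up to the $1/4$ factor from $\xi_i$).

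For a diagonal matrix $D$ and positive semidefinite $X_S^\T X_S$, I would then invoke the elementary inequality $\|X_S^\T D X_S\|_2\leq\|D\|_\infty\|X_S^\T X_S\|_2$, together with the observation that $\|X_S^\T X_S/n\|_2\leq \|X^\T X/n\|_2=\cmax$ (since $X_S^\T X_S$ is a principal submatrix of the positive semidefinite matrix $X^\T X$). This yields
\[
|(X^\T r/n)_j|\leq \tfrac{\mconst\cmax}{2}\,\|\prims-\trueparas\|_2^2\,.
\]
The hypothesis on $\tuning$ implies $\tuning\leq c_{\min}^2/(10\mconst s\cmax)$ (since $\gamma/(10(2-\gamma))\leq 1$ for $\gamma\in(0,1]$), so Lemma~A\ref{lem:L2norm} applies and gives $\|\prims-\trueparas\|_2^2\leq 25\tuning^2 s/c_{\min}^2$. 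Plugging this in and applying the assumption $\tuning\leq \gamma c_{\min}^2/(100\mconst(2-\gamma)s\cmax)$ once more, one factor of $\tuning$ cancels the $s\mconst\cmax/c_{\min}^2$ term, and the remaining constants deliver exactly the bound $\tuning\gamma/\{4(2-\gamma)\}$.

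I expect the only mildly delicate step to be the factoring trick in the second display: without it, bounding $|r_i|$ by $\tfrac{1}{2}\mconst^2 s\,\|\prims-\trueparas\|_2^2$ and then multiplying by $\|X^\T\|_\infty/n\leq \mconst$ produces an additional factor of $s$, which would force an $s^2$ in the admissible range for $\tuning$. Moving $X_{ij}$ into the diagonal and using $\cmax$ rather than a crude row-wise bound is what reduces this to the correct linear-in-$s$ scaling matching the hypothesis of the lemma.
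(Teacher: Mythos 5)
Your proof is correct and follows essentially the same route as the paper's: write the Taylor remainder as a quadratic form, sum over $i$ to obtain the Gram matrix $X_S^\T X_S/n$, bound the result by $c_{\max}\norm{\prims-\trueparas}_2^2$, and then invoke Lemma~A\ref{lem:L2norm} together with the hypothesis on $\tuning$ (checking, as you do, that $\gamma/(10(2-\gamma))\le 1$). The only cosmetic difference is that the paper bounds $|X_{ij}|\le \mconst$ as a scalar up front and then uses $\sum_i|r_i|\le (\prims-\trueparas)^\T\big(\sum_i x_{i,S}x_{i,S}^\T\big)(\prims-\trueparas)$ directly, which already avoids the extra factor of $s$ you were concerned about, so the diagonal-matrix device $D_j$ is not actually needed.
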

\begin{proof}\enspace
The proof follows readily from Lemma~\ref{lem:L2norm}.
To see this, note that because $|X_{ij}|\leq 1$ for all $i\in [n]$ and $j\in [p]$, it holds that
  \begin{equation*}
    \begin{aligned}
      |\res^{\T} x^j/n|=|\sum_{i=1}^n X_{ij}r_i/n|\leq \sum_{i=1}^n|X_{ij}
      ||r_i|/n\leq \sum_{i=1}^n|r_i|/n
      \end{aligned}
  \end{equation*}
    for all $j\in[p]$. By the closed form of $r_i$ in Equation~\eqref{eqn:ri}, it holds that
    $|\xi_i(t)|\leq 1$ for all $t\in[0,1]$, and since $\primsc=\trueparasc=0$, we then get
    \begin{equation}\label{eqn:upperboundL2norm}
    \begin{aligned}
      |\res^{\T} x^j/n|&\leq (\prims-\trueparas)^\T\Big(\sum_{i=1}^n x_{i,S}x_{i,S}^\T/n\Big)(\prims-\trueparas)
      \\
      &=(\prims-\trueparas)^\T (X_{S}^\T X_{S}/n)(\prims-\trueparas)\\
      &=(\primal - \truepara)^\T (X^\T X/n)(\primal-\truepara)\\
      &\leq  c_{\max} \norm{\prims-\trueparas}_2^2\,.
    \end{aligned}
  \end{equation}
 Moreover,  because $\lambda \leq \gamma c_{\min}^2/(100(2-\gamma)sc_{\max})\leq  c_{\min}^2/(10sc_{\max})$ and $\norm{X^\T \varepsilon/n}_{\infty} \leq \tuning/4$, the assumptions of Lemma~\ref{lem:L2norm} are satisfied. Combining this lemma with Equation~\eqref{eqn:upperboundL2norm} yields
  \begin{equation*}
    \begin{aligned}
      |\res^{\T} x^j/n|&\leq \frac{25\tuning^2 sc_{\max}}{c_{\min}^2}
      \leq \frac{\tuning \gamma}{4(2-\gamma)}
    \end{aligned}
  \end{equation*}
  for each $j\in [p]$. Thus, $\norm{X^\T \res/n}_{\infty}\leq \tuning \gamma/(4(2-\gamma))$ as desired.
\end{proof}

\vspace{3mm}

\begin{lemma}[Primal dual witness construction]\label{lem:PDW}
  The pair $(\primal, \dual)$ defined above satisfies the following three properties:
    \begin{enumerate}
      \item[(i)] It holds that $ \duals\in \partial \norm{\prims}_1;$
      \item[(ii)] If $\norm{\dualsc}_{\infty}<1$, then any solution $\estipara$ to the problem~\eqref{eqn:SLR} satisfies \textnormal{supp}$(\estipara)\subset S$;
      \item[(iii)] Under Assumption~\ref{Ass:BEC} and $\norm{\dualsc}_{\infty}<1$, the solution $\estipara$ is unique, and $\estipara = \primal\redef(\prims^\top,\primsc^\top)^\top\in \R^p$.
    \end{enumerate}
\end{lemma}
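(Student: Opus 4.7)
The plan is the classical primal--dual witness scheme. Parts (i) and (ii) require only convexity of the objective and the defining relations of $(\primal,\dual)$; part (iii) leverages Assumption~\ref{Ass:BEC} to upgrade ``solution'' to ``unique solution.''

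For part (i), I would begin from the first-order optimality (KKT) conditions of the restricted problem that defines $\prims$. Since $\prims$ is the minimizer over $\R^s$ of the log-likelihood penalized by $\tuning\norm{\cdot}_1$, there exists a subgradient $z\in \partial\norm{\prims}_1$ such that $\frac{1}{n}X_S^\T\bigl(\rho(X\primal)-Y\bigr)+\tuning z=0$, where I used that $\primsc=0$ so that $\rho(x_i,\primal)=\rho(x_{i,S},\prims)$. Reading this componentwise and comparing with the definition of $\dual_j$ yields $\duals=z\in \partial\norm{\prims}_1$ immediately. In particular $\norm{\duals}_\infty\leq 1$.

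For part (ii), observe that by construction of $\dual$ we have the global stationarity condition $\frac{1}{n}X^\T(\rho(X\primal)-Y)+\tuning\dual=0$. Combined with (i) and the hypothesis $\norm{\dualsc}_\infty<1\leq 1$, the pair $(\primal,\dual)$ satisfies the full KKT system of the penalized problem \eqref{eqn:SLR}, so $\primal$ is a solution. To propagate zeros to every solution, let $\tilde\beta$ be any other minimizer. Convexity of $L$ gives $L(\tilde\beta)\geq L(\primal)-\tuning\dual^\T(\tilde\beta-\primal)$, and equality of objective values then forces $\norm{\tilde\beta}_1\leq \norm{\primal}_1+\dual^\T(\tilde\beta-\primal)$. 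Using that $\duals\in\partial\norm{\prims}_1$ implies $\dual^\T\primal=\norm{\primal}_1$, this simplifies to $\norm{\tilde\beta}_1\leq \dual^\T\tilde\beta$. Since $\norm{\dual}_\infty\leq 1$, the reverse inequality always holds, so we obtain the equality $\sum_{j}(|\tilde\beta_j|-\dual_j\tilde\beta_j)=0$ of a sum of nonnegative terms. For $j\in S^c$ the strict bound $|\dual_j|<1$ forces $\tilde\beta_j=0$, which gives $\supp(\tilde\beta)\subset S$.

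For part (iii), by (ii) every solution $\tilde\beta$ of \eqref{eqn:SLR} has support in $S$, hence $\tilde\beta_S$ must itself be a minimizer of the restricted problem defining $\prims$. Assumption~\ref{Ass:BEC} gives $\Omega_{\min}(X_S^\T\deriva X_S/n)>0$, which forces $X_S$ to have full column rank; consequently the Hessian $X_S^\T D(\theta)X_S/n$ of the restricted negative log-likelihood is positive definite at every $\theta\in\R^s$ (the diagonal weights $\derivafun(x_{i,S},\theta)$ are positive). The restricted penalized objective is thus strictly convex, its minimizer is unique, and $\tilde\beta_S=\prims$, yielding $\tilde\beta=\primal$.

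I expect part (ii) to be the main obstacle, since the extension from ``$(\primal,\dual)$ is optimal'' to ``every optimizer has its support in $S$'' requires the subtle chain of inequalities that turns the strict bound $\norm{\dualsc}_\infty<1$ into a pointwise zero conclusion. Parts (i) and (iii) are short once (ii) is in hand, although one must be careful in (iii) to invoke Assumption~\ref{Ass:BEC} correctly to guarantee that the restricted Hessian stays positive definite away from $\truepara$.
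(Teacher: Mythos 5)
Your proposal is correct and follows essentially the same primal--dual witness route as the paper: KKT conditions of the restricted problem for (i), the convexity/H\"older chain forcing $\norm{\tilde\beta}_1=\langle\dual,\tilde\beta\rangle$ and hence $\tilde\beta_{S^c}=0$ for (ii), and reduction to the restricted problem plus Assumption~\ref{Ass:BEC} for (iii). Your treatment of (iii) is in fact slightly more explicit than the paper's, since you spell out that positive definiteness of $X_S^\T\deriva X_S/n$ forces $X_S$ to have full column rank and therefore the restricted Hessian is positive definite at every $\theta$, which is the step the paper leaves implicit.
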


\begin{proof}\enspace We conduct the proof in three steps in correspondence with the three claims. 

  \textbf{Step 1:}\enspace
   We show that if $\norm{\dualsc}_{\infty}\leq 1$, the pair $(\primal,\dual)$ satisfies the KKT conditions, that is, $\dual\in\partial \norm{\primal}_1$ and
  \[ -\sum_{i=1}^nX_{ij}\left(y_i-\rho(x_{i},\primal)\right)/n+\tuning \dualj=0\]
  for $j \in [p]$.
  By 1. in the construction at the beginning of this section,
    there is a $\kappa\in \partial\norm{\prims}_1\subset \R^{s}$ such that
  \[ -\sum_{i=1}^nX_{ij}\left(y_i-\rho(x_{i,S},\prims)\right)/n+\tuning \kappa_j=0\]
  for $j\in S$.
  Hence, with $\primsc=0$ in 2. and the definition of $\dualj$ in 3.,
  \[ \dualj=\sum_{i=1}^nX_{ij}\left(y_i-\rho(x_{i,S},\prims)\right)/(n\tuning)=\kappa_j\]
  for $j \in S$, that is, $\duals\in \partial \norm{\prims}_1$ as desired.

\textbf{Step 2:}\enspace  We now show that \textnormal{supp}$(\estipara)\subset S$ for all $\estipara\in \R^p$ that satisfy
  \[ \estipara\in \argmin_{\beta\in \R^p}\{L(\para)+\tuning\norm{\beta}_1\}\,.\]
 In view of the condition $\norm{\dualsc}_{\infty}<1$ and of Step~1, the pair $(\primal,\dual)$ satisfies the KKT conditions for the above problem and thus, $\primal$ is a minimizer of the objective function. Consequently,
  \[ L(\estipara)+\tuning\norm{\estipara}_1=L(\primal)+\tuning\norm{\primal}_1\,.\]
  Since,  $\dual\in\partial\norm{\primal}_1$ by Step~1, it holds that $\norm{\primal}_1=\langle \dual,\primal\rangle.$ Plugging this into the previous display yields
  \[ L(\estipara)+\tuning\norm{\estipara}_1=L(\primal)+\tuning\langle \dual,\primal\rangle\,.\]
  We can now subtract $\tuning\langle \dual, \estipara\rangle$ on both sides to obtain
  \[ L(\estipara)+\tuning\norm{\estipara}_1-\tuning\langle \dual, \estipara\rangle=L(\primal)+\tuning\langle \dual,\primal-\estipara\rangle\,.\]
  By 2. and 3. in the above construction, it holds that $\tuning\dual=-L'(\primal)$, where $L'(\cdot)$ denotes the derivative  of~$L(\cdot)$.
  Thus, we can further deduce
  \[ \tuning\norm{\estipara}_1-\tuning\langle \dual,\estipara \rangle=L(\primal)-\langle L'(\primal),\primal-\estipara\rangle-L(\estipara)\,.\]
  Because the Hessian of $L(\para)$ is a non-negative matrix, $L(\cdot)$ is a convex function. It holds that
  \[L(\estipara)\geq L(\primal)+\langle L'(\primal),\estipara-\primal\rangle\,.\]
 Combining the two displays yields
  \[ \tuning \norm{\estipara}_1-\tuning\langle \dual, \estipara\rangle\leq 0\,,\]
 and dividing by the tuning parameter yields further
  \[ \norm{\estipara}_1\leq \langle \dual, \estipara\rangle\,.\]
  However, by H\"{o}lder's inequality and $\norm{\dual}_{\infty}\leq 1$, it holds that
  \[ \norm{\estipara}_1\geq \langle \dual,\estipara\rangle\,.\]
  Consequently,
  \[ \norm{\estipara}_1=\langle \dual,\estipara\rangle\,.\]
  In view of the condition $\norm{\dualsc}_{\infty}<1$, this can only be true if $\estiparaj=0$ for all $j\in {S}^c$. This completes the proof of Step~2.

  \textbf{Step 3:}\enspace We now show that $\estipara=\primal$.
  From Step~2, we deduce that $\estipara=(\estiparas^\T,0)^\T$ with
  \[ \estiparas\in \argmin_{\theta\in \R^{s}}\bigg\{\sum_{i=1}^n(\log(1+\exp(x_{i,S}^\T
       \theta))-y_ix_{i,S}^\T \theta)/n+\tuning\norm{\theta}_1\bigg\}\,. \]
  Moreover, since the minimal eigenvalue of $X_{S}^\top\deriva X_{S}/n$ is larger than zero by Assumption~\ref{Ass:BEC}, this problem has a unique solution.  Combining this with 1. in the construction at the beginning of this section yields  $\estiparas=\prims$, that is, $\estipara = \primal$.
\end{proof}
\noindent {\it Proof of Theorem~\ref{Thm:SR}.}\enspace

We conduct the proof in two steps. The first step is to show that $\textnormal{supp}(\estipara)\subset S$ and that~$\estipara$ is the unique solution of the problem~\eqref{eqn:SLR}. The second step is to show the $\ell_{\infty}$-bound and the result on support recovery.

\textbf{Step 1:}\enspace We first show that $\textnormal{supp}(\estipara)\subset S$ and that $\estipara$ is the unique solution. This result holds true if the primal-dual pair $(\primal,\dual)\in \R^p\times \R^p$ constructed as in Lemma~\ref{lem:PDW} satisfies
  $ \norm{\dualsc}_{\infty}<1$. To show the latter inequality, we use the definition of $\varepsilon$ and Equation~\eqref{eqn:mvt} to rewrite $3.$ in the construction above as
  \[ \sum_{i=1}^n X_{ij} \derivafun(x_{i,S},\trueparas) x_{i,S}^\T (\prims-\trueparas)/n-\sum_{i=1}^n X_{ij}(\varepsilon_i-r_i)/n+\tuning \dualj=0 \qquad\quad (j\in[p])\,.\]
 Because $\derivafun(x_{i,S},\trueparas)=\derivafun(x_{i},\truepara)$ for each $i\in [n]$, we can  put the above display in the matrix form
  \begin{equation*}
   (X^\T \deriva X/n)
   \left(\begin{matrix}
     \trueparas-\prims\\
     0
     \end{matrix}\right)+
     X^\T (\varepsilon-\res)/n -\tuning \left(
     \begin{matrix}
       \duals\\
       \dualsc
     \end{matrix}\right)=0\,,
   \end{equation*}
   and then in the block matrix form
   \begin{equation*}\label{eqn:BM}
     n^{-1}\left(
     \begin{matrix}
       X_{S}^\T \deriva X_{S} & X_{S}^\T \deriva X_{{S}^c}\\
       X_{{S}^c}^\T \deriva X_{S} &X_{{S}^c}^\top \deriva X_{{S}^c}
     \end{matrix}\right)
     \left(\begin{matrix}
     \trueparas-\prims\\
     0
     \end{matrix}\right)+n^{-1}
     \left(\begin{matrix}
       X_{S}^\top(\varepsilon-\res)\\
       X_{{S}^c}^\top(\varepsilon-\res)
     \end{matrix}\right)
     -\tuning\left(
     \begin{matrix}
      \duals\\
       \dualsc
     \end{matrix}\right)=0\,.
   \end{equation*}
   We now solve this equation for $\tuning\dualsc$ and find
   \begin{equation*}\label{eqn:dual}
    \tuning \dualsc=X_{{S}^c}^\T \deriva X_{S}(\trueparas-\prims)/n+X_{{S}^c}^\T(\varepsilon-\res)/n\,.
    \end{equation*}
    Since the matrix $X_{S}^\T W X_{S}$ is invertible by Assumption~\ref{Ass:MIC}, we can solve the block matrix equation also for $(\trueparas-\prims)/n$ and find
   \begin{equation}\label{eqn:para}
     (\trueparas-\prims)/n=-(X_{S}^\T \deriva X_{S})^{-1}X_{S}^\T(\varepsilon-\res)/n+\tuning(X_{S}^\T\deriva X_{S})^{-1}\duals\,.
   \end{equation}
    Combining the two displays yields
   \begin{equation}\label{eqn:tdc}
   \begin{aligned}
   \tuning \dualsc=&-X_{{S}^c}^\T\deriva X_{S}(X_{S}^\T \deriva X_{S})^{-1}X_{S}^\T(\varepsilon-\res)/n+
   X_{{S}^c}^\T(\varepsilon-\res)/n\\
   & +\tuning X_{{S}^c}^\T \deriva X_{S}(X_{S}^\T \deriva X_{S})^{-1}\duals\,.
   \end{aligned}
   \end{equation}
Taking $\ell_{\infty}$-norms on both sides of Equation~\eqref{eqn:tdc} and using the triangle inequality, we find
\begin{equation*}
  \begin{aligned}
    \norm{\tuning \dualsc}_{\infty} \leq \ & \norm{X_{{S}^c}^\T\deriva X_{S}(X_{S}^\T \deriva X_{S})^{-1}X_{S}^\T(\varepsilon-\res)/n}_{\infty} + \norm{X_{{S}^c}^\T(\varepsilon-\res)/n}_{\infty}
    \\  &+\tuning \norm{X_{{S}^c}^\T\deriva X_{S}(X_{S}^\T \deriva X_{S})^{-1}\duals}_{\infty}\,.
  \end{aligned}
\end{equation*}
Invoking properties of the induced matrix norms and the $\ell_\infty$-norm and the condition $\norm{\duals}_{\infty}\leq 1$ deduced in Lemma~\ref{lem:PDW}, and rearranging the terms then provide us with
   \begin{equation*}
   \begin{aligned}
   \norm{\tuning \dualsc}_{\infty}
   &\leq \opnorm{X_{{S}^c}^\T \deriva X_{S}(X_{S}^\T \deriva X_{S})^{-1}}_{\infty}\left(\norm{X^\T (\varepsilon-r)/n}_{\infty}+\lambda\right)
   +\norm{X^\T (\varepsilon-r)/n}_{\infty}\,.
   \end{aligned}
   \end{equation*}
   Next, we divide by $\tuning$ on both sides, apply Assumption~\ref{Ass:MIC}, use the triangle inequality, and rearrange the terms again to find
   \begin{equation*}
     \begin{aligned}
       \norm{\dualsc}_{\infty}&\leq(1-\gamma)+\frac{2-\gamma}{\tuning}(\norm{X^\T \varepsilon/n
       }_{\infty}+\norm{X^\T r/n}_{\infty})\,.
     \end{aligned}
   \end{equation*}
   By the definition of $\mathcal{T}_{\tuning}$, it holds that $\norm{X^\T\varepsilon/n}_{\infty}\leq \tuning \gamma/(4(2-\gamma))$, which is equivalent to
   \begin{equation*}
   \frac{2-\gamma}{\tuning}\norm{X^\T \varepsilon/n}_{\infty}\leq \frac{\gamma}{4}\,.
   \end{equation*}
   Since $\gamma\in(0,1]$, the condition $\norm{X^\T \varepsilon/n}_{\infty} \leq \tuning/4$ in Lemma~\ref{lem:Linfnorm} is satisfied on the event $\mathcal{T}_{\tuning}$. Combining this with the assumption $\tuning \leq \gamma c_{\min}^2/(100(2-\gamma)sc_{\max})$ implies that $\norm{X^\T r/n}_{\infty}\leq \tuning\gamma/(4(2-\gamma))$, see Lemma~\ref{lem:Linfnorm}. Thus,
   \begin{equation*}
     \norm{\dualsc}_{\infty}\leq(1-\gamma)+\frac{\gamma}{4}+\frac{2-\gamma}{\tuning}\norm{X^\T r/n}_{\infty}\leq(1-\gamma)+\frac{\gamma}{4}+\frac{\gamma}{4}<1\,.
   \end{equation*}
  We finally invoke Lemma~\ref{lem:PDW} to conclude that $\textnormal{supp}(\estipara)\subset S$ and that $\estipara = \primal\redef (\prims^\T,0^\T)^\T \in \R^p$ is the unique solution of the problem~\eqref{eqn:SLR}, as desired.

   \textbf{Step 2:}\enspace
   To show the $\ell_{\infty}$-bound, we use Equation~\eqref{eqn:para} and $\estiparas=\prims$ from  Step~1 and find
   \begin{equation*}
    \begin{aligned}
    \trueparas-\estiparas&=-(X_{S}^\T \deriva X_{S})^{-1}X_{S}^\T(\varepsilon-r)+\tuning n(X_{S}^\T \deriva X_{S})^{-1}\duals \\
    &=-\left(X_{S}^\T \deriva X_{S}/n\right)^{-1}\left(X_{S}^\T(\varepsilon-r)/n\right)+\tuning \left(X_{S}^\T \deriva X_{S}/n\right)^{-1}\duals\,.
    \end{aligned}
    \end{equation*}
We then find similarly as before
   \begin{equation*}
     \begin{aligned}
       \norm{\trueparas-\estiparas}_{\infty}
       &\leq \opnorm{(X_{S}^\T \deriva X_{S}/n)^{-1}}_{\infty}(\tuning+\norm{X^\T
       (\varepsilon-r)/n}_{\infty})\,.
     \end{aligned}
   \end{equation*}
   By the definition of $\ratio$, we have
   \begin{equation*}
   \begin{aligned}
   \opnorm{(X_{S}^\T \deriva X_{S}/n)^{-1}}_{\infty}\leq \ratio \,\opnorm{(X_{S}^\T \deriva X_{S}/n)^{-1}}_{2}
   \leq \ratio/c_{\min}\,.
   \end{aligned}
   \end{equation*}
   Combining this with the bounds on $\norm{X^\T \varepsilon/n}_{\infty}$ and $\norm{X^\T r/n}_{\infty}$ deduced   in Step~1 yields
    \[
   \norm{\trueparas-\estiparas}_{\infty}\leq \frac{\ratio }{c_{\min}}\Big(\tuning + \frac{\tuning\gamma}{4(2-\gamma)}+\frac{\tuning\gamma}{4(2-\gamma)}\Big)\leq 1.5\ratio\tuning/c_{\min}\,.
   \]
   Since $\textnormal{supp}(\estipara)\subset S$ by Step~1, the above display implies that
   \[
   \norm{\truepara-\estipara}_{\infty}\leq 1.5\ratio\tuning/c_{\min}\,.
   \]
   Consequently, $\textnormal{supp}(\estipara)= S$ as long as $\min_{j \in S}|\trueparaj|>1.5\ratio\tuning/c_{\min}$. This concludes the proof.
   \QEDB

\vspace{3mm}\noindent {\it Proof of Theorem~\ref{Thm:AV}.}\enspace

The proof is conducted in three steps. The first step is to show the bound on $\hat{\tuning}$, the second step is to show the bound on the sup-norm error, and the last step is to  show that $\hat{S}\supset S$. To begin with,
 we define the event
 \[ \mathcal{T}_{\delta}^*=\left\{\frac{4(2-\gamma)}{n\gamma}\norm{X^\T \varepsilon}_{\infty} \leq \tuning_{\delta}^* \right\}.
 \]
 By our definition of the oracle tuning parameter in~\eqref{eqn:AV}, we have
that $\Pr(\mathcal{T}_{\delta}^*) \geq 1 - \delta$.
Thus, it suffices to show that the results hold conditioned on the
event $\mathcal{T}_{\delta}^*$.

\textbf{Step 1:}\enspace  To show that
$\hat{\tuning} \leq \tuning_{\delta}^*$, we proceed by proof by contradiction. If $\hat{\tuning}>\tuning_{\delta}^*$, then the definition of our testing-based calibration implies that there must exist two tuning parameters $\tuning',\tuning''\geq \tuning_{\delta}^*$ such that
\begin{align}\label{eqntwotuning}
\norm{\estipara_{\tuning'}-\estipara_{\tuning''}}_{\infty}>C(\tuning'+\tuning'')\,.
\end{align}
However, because both $\mathcal{T}_{\tuning'}$ and $\mathcal{T}_{\tuning''}$
include $\mathcal{T}_{\delta}^*$, and because $C\geq 1.5\ratio/c_{\min}$, Theorem~\ref{Thm:SR} implies that $\norm{\estipara_{\tuning'}-\truepara}_{\infty}\leq C\tuning'$ and $\norm{\estipara_{\tuning''}-\truepara}_{\infty}\leq C\tuning''$. By applying the triangle inequality, we have
\[
\norm{\estipara_{\tuning'}-\estipara_{\tuning''}}_{\infty}\leq \norm{\estipara_{\tuning'}-\truepara}_{\infty}+\norm{\estipara_{\tuning''}-\truepara}_{\infty}\leq
C(\tuning'+\tuning'')\,.
\]
This upper bound contradicts our earlier conclusion~\eqref{eqntwotuning} and, therefore, yields the desired bound on the tuning parameter.

\textbf{Step 2:}\enspace On the event $\mathcal{T}_{\delta}^*$, we have $\hat{\tuning} \leq \tuning_{\delta}^*$, and so the testing-based method implies that
\[
\norm{\estipara_{\hat{\tuning}}-\estipara_{\tuning_{\delta}^*}}_{\infty}\leq C(\hat{\tuning}+\tuning_{\delta}^*)\leq 2 C\tuning_{\delta}^*\,.
\]
By applying the triangle inequality, we find that
\[
\norm{\estipara_{\hat{\tuning}}-\truepara}_{\infty}\leq \norm{\estipara_{\hat{\tuning}}-\estipara_{\tuning_{\delta}^*}}_{\infty}+
\norm{\estipara_{\tuning_{\delta}^*}-\truepara}_{\infty}\leq
2C\tuning_{\delta}^*+\norm{\estipara_{\tuning_{\delta}^*}-\truepara}_{\infty}\,.
\]
Theorem~\ref{Thm:SR} implies that $\norm{\estipara_{\tuning_{\delta}^*}-\truepara}_{\infty}\leq 1.5\ratio\tuning_{\delta}^*/c_{\min}\leq C\tuning_{\delta}^*$, and combining the pieces  yields the desired sup-norm bound.

\textbf{Step 3:}\enspace Let us finally show that $\hat{S}\supset S$. Suppose $j\in S$, then by the bound on the sup-norm error that we deduce in Step~2, we have
\[
|(\estipara_{\hat{\tuning}})_j|\geq |\trueparaj| - 3C\tuning_{\delta}^*\,.
\]
In view of the condition $\min_{j\in S}|\trueparaj|>6C\tuning_{\delta}^*$ and the definition $\hat{S}=\{j\in [p]:|(\estipara_{\hat{\tuning}})_j|\geq 3C\hat{\tuning}\}$, we conclude that $j\in \hat{S}$, that is, $\hat{S}\supset S$. This completes the proof.
\QEDB
\subsection{Additional simulations}\label{sec:appsim}


In this section, we first present additional simulation results for the settings described in Section~\ref{sec:sim}. Figure~\ref{fig:hdapp} shows the results of Hamming distance for $\kappa\in\{0,0.75\}$ and $s\in\{8,12,15\}$, and Figure~\ref{fig:timeapp} shows corresponding run times for $\kappa=0.75$. The results for more dense signals ($s\in\{20,25\}$) and $\kappa\in\{0,0.25,0.5,0.75\}$ and $p\in\{200,500\}$ are summarized in Figure~\ref{fig:hdappdense} and~\ref{fig:timeappdense}.

Next, we  expand our simulation studies by varying covariance structures for the covariates and manipulating the values of the $\beta^*$'s. As pointed out in Section~\ref{sec:sim}, we generate the row vectors $x_i$ of $X$ independently from Gaussian with mean zero and covariance $\Sigma$. Here, we consider $\Sigma$ as $\Sigma=(\sigma_{ij})$, where $\sigma_{ij}=0.9^{|i-j|}$. We then generate each component of $\beta^*$ in the support set $S$ from $N(\mu_{\beta^*},1)$. The parameter $\mu_{\beta^*}$ is used to control the signal level, and we consider $\mu_{\beta^*}=2.5$ and 5 for low and high signal levels, respectively. Similar to the setting in Section~\ref{sec:sim}, we still set $n=200$, $p\in\{200,500\}$, $s\in\{8,12,15,20,25\}$. All the
results are averaged over 200 replications. Figure~\ref{fig:hdnewsetting} and~\ref{fig:timenewsetting} show the results of Hamming distance and run times for $s\in\{8,12,15\}$, respectively. Figure~\ref{fig:hdnewsettingdense} and~\ref{fig:timenewsettingdense} show the corresponding results for $s\in\{20,25\}$.

\revision{
We finally test the proposed scheme in ultra-high dimensional settings.
We consider $p=3000$ and leave all other parameters as in Section~\ref{sec:sim}, that is, $n=200$, $\kappa\in\{0, 0.25, 0.5, 0.75\}$, and $s\in\{8, 12, 15, 20, 25\}$.
Figure~\ref{fig:hdappultrahigh} shows the Hamming distances for $s\in\{8, 12, 15\}$, and Figure~\ref{fig:timeappultrahigh} shows the corresponding run times.
Figure~\ref{fig:hdappultrahighdense} and Figure~\ref{fig:timeappultrahighdense} show the same for the more dense cases $s\in\{20, 25\}$.
}

\revision{
Figures~\ref{fig:hdapp}~--~\ref{fig:timeappultrahighdense} indicate that our testing-based method rivals or outmatches all alternatives across a very wide range of settings, both in computational time and in statistical accuracy.
Hence, beyond its theoretical guarantees (which none of the alternatives has), the proposed scheme is also a competitor in practice.
}


\begin{figure}
\centering
       \subfigure[~ $p = 200$, $\kappa = 0$]{
  \includegraphics[width=0.45\textwidth]{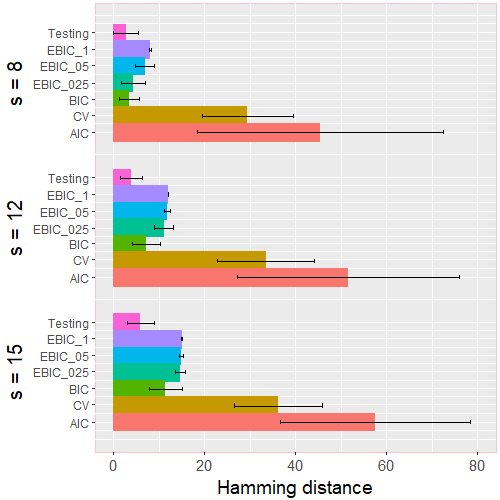}
  }
  ~~~~
\subfigure[~ $p = 200$, $\kappa = 0.75$]{
  \includegraphics[width=0.45\textwidth]{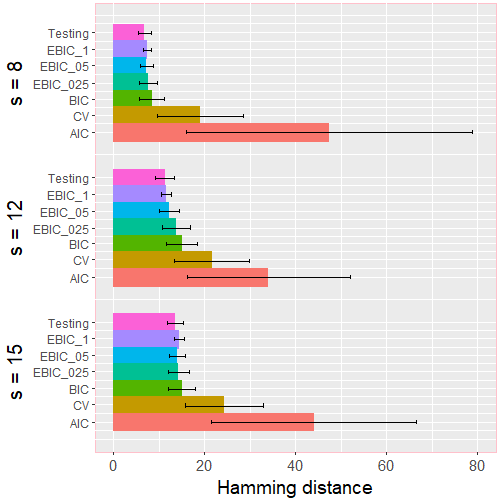}
  }
  \vspace{6mm}
 \subfigure[~ $p = 500$, $\kappa = 0$]{{
  \includegraphics[width=0.45\textwidth]{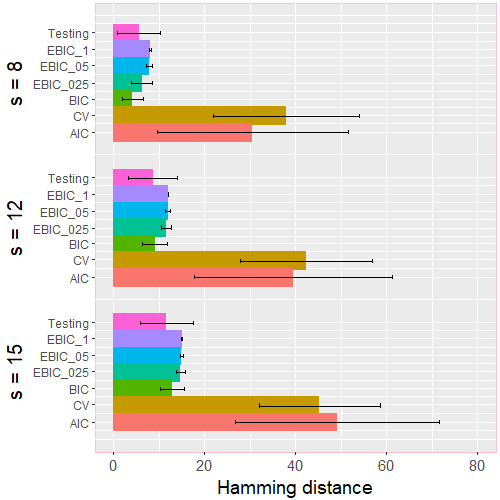}
  }}
  ~~~~
  \subfigure[~ $p = 500$, $\kappa = 0.75$]{{
  \includegraphics[width=0.45\textwidth]{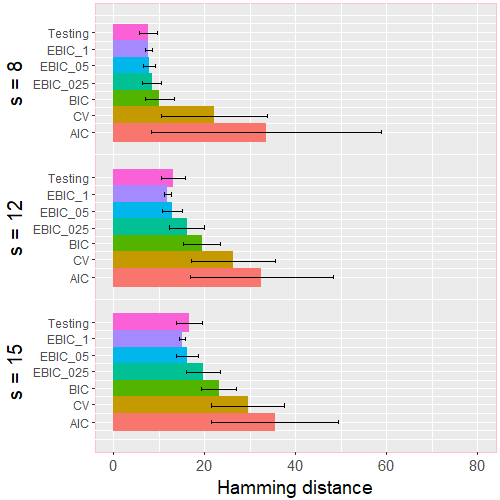}
  }}
  \caption{Variable selection errors of $\ell_1$-regularized logistic regression with  seven different tuning parameter calibration schemes for settings described in Section~\ref{sec:sim}. The $12$ simulation settings differ in the number of variables $p$, correlation $\kappa$, and sparsity level~$s\in\{8,12,15\}$.}\label{fig:hdapp}
\end{figure}

\begin{figure}
\centering
       \subfigure[~ $p = 200$, $\kappa = 0.75$]{
  \includegraphics[width=0.45\textwidth]{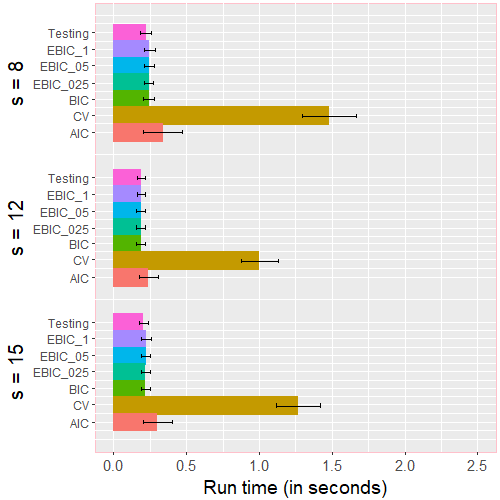}
  }
  ~~~~
\subfigure[~ $p = 500$, $\kappa = 0.75$]{
  \includegraphics[width=0.45\textwidth]{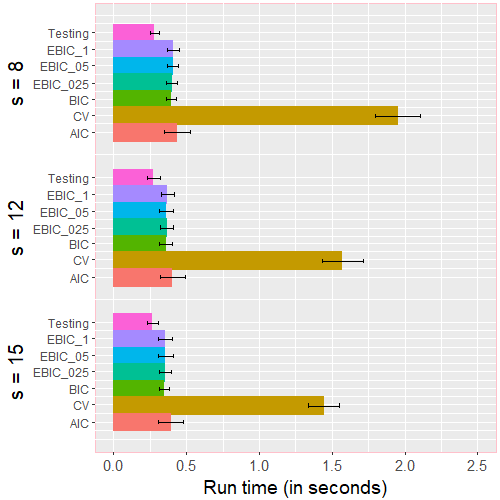}
  }
  \caption{Run times (in seconds) of $\ell_1$-regularized logistic regression with seven different tuning parameter calibration schemes for settings described in Section~\ref{sec:sim}. Depicted are the results for $p \in\{200, 500\}$, $\kappa= 0.75$ and $s\in\{8,12,15\}$.}\label{fig:timeapp}
\end{figure}

\begin{figure}
\centering
\subfigure[~ $p = 200$, $\kappa = 0.25$]{
  \includegraphics[width=0.35\textwidth]{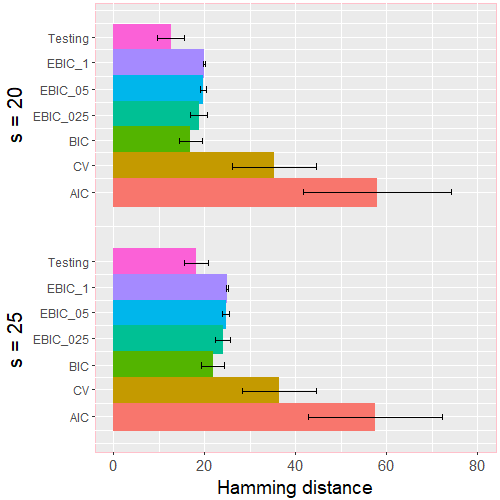}
  }
  ~~~~
\subfigure[~ $p = 200$, $\kappa = 0.5$]{
  \includegraphics[width=0.35\textwidth]{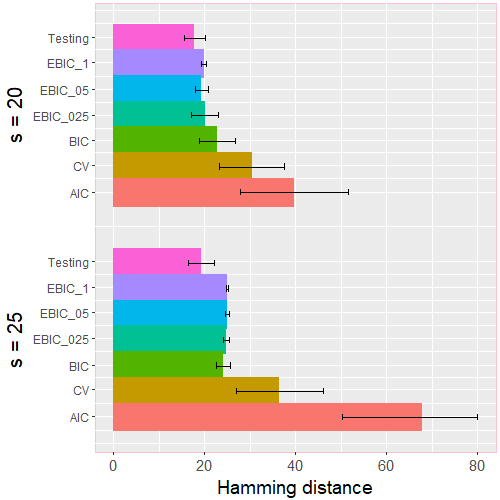}
  }
 \subfigure[~ $p = 500$, $\kappa = 0.25$]{{
  \includegraphics[width=0.35\textwidth]{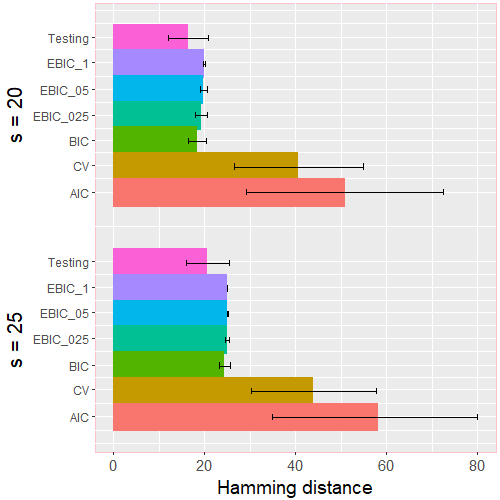}
  }}
  ~~~~
  \subfigure[~ $p = 500$, $\kappa = 0.5$]{{
  \includegraphics[width=0.35\textwidth]{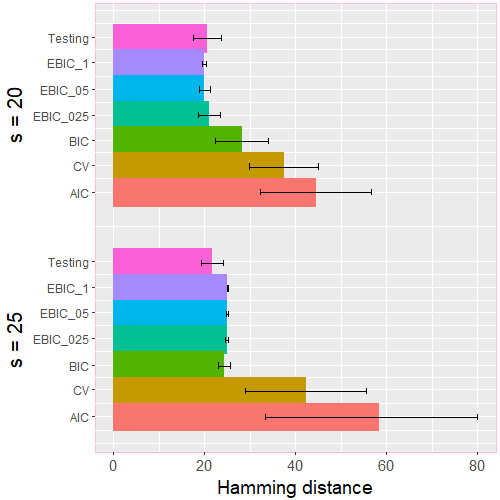}
  }}
       \subfigure[~ $p = 200$, $\kappa = 0$]{
  \includegraphics[width=0.35\textwidth]{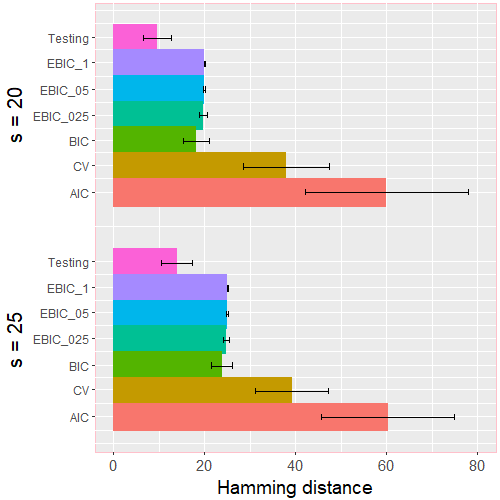}
  }
  ~~~~
\subfigure[~ $p = 200$, $\kappa = 0.75$]{
  \includegraphics[width=0.35\textwidth]{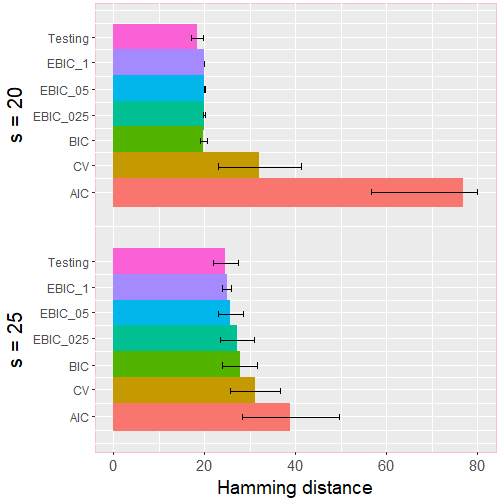}
  }
 \subfigure[~ $p = 500$, $\kappa = 0$]{{
  \includegraphics[width=0.35\textwidth]{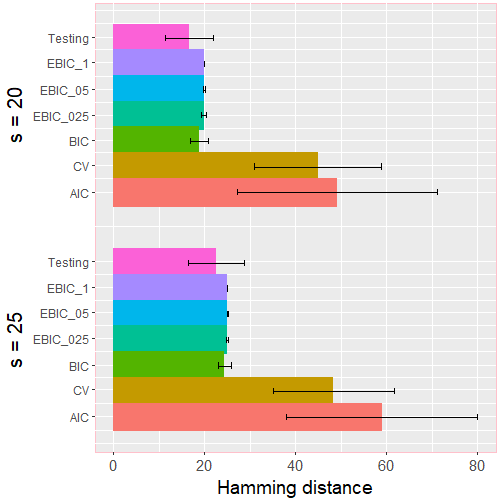}
  }}
  ~~~~
  \subfigure[~ $p = 500$, $\kappa = 0.75$]{{
  \includegraphics[width=0.35\textwidth]{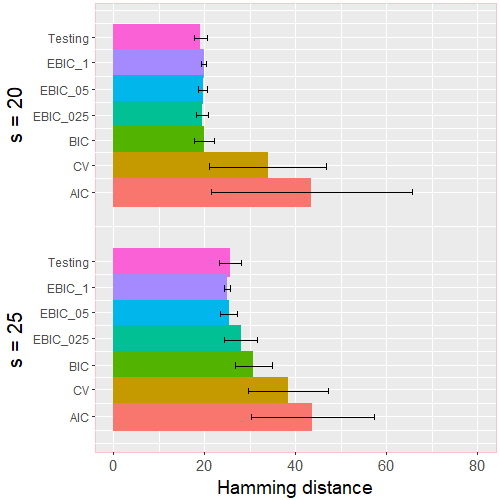}
  }}
  \caption{Variable selection errors of $\ell_1$-regularized logistic regression with seven different tuning parameter calibration schemes for settings described in Section~\ref{sec:sim}. The $24$ simulation settings differ in the number of variables $p$, correlation $\kappa$, and sparsity level~$s\in\{20,25\}$.}\label{fig:hdappdense}
\end{figure}

\begin{figure}
\centering
       \subfigure[~ $p = 200$, $\kappa = 0.75$]{
  \includegraphics[width=0.45\textwidth]{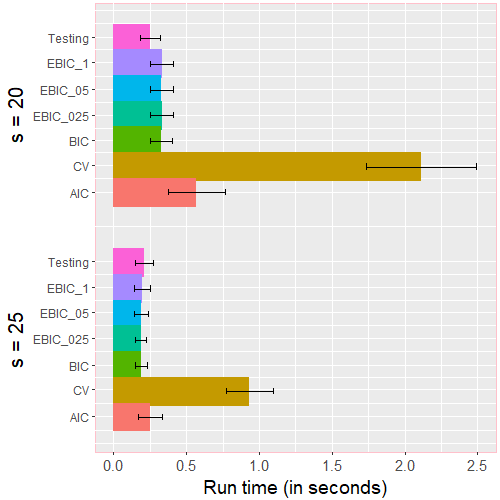}
  }
  ~~~~
\subfigure[~ $p = 500$, $\kappa = 0.75$]{
  \includegraphics[width=0.45\textwidth]{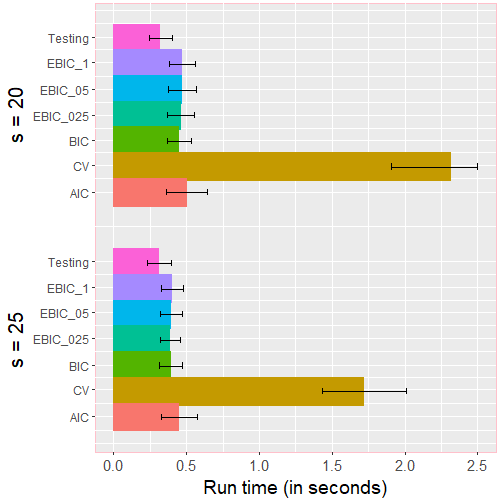}
  }
  \caption{Run times (in seconds) of $\ell_1$-regularized logistic regression with seven different tuning parameter calibration schemes for settings described in Section~\ref{sec:sim}. Depicted are the results for $p \in\{200, 500\}$, $\kappa= 0.75$, and $s\in\{20,25\}$.}\label{fig:timeappdense}
\end{figure}

\begin{figure}
\centering
       \subfigure[~ $p = 200$, $\mu_{\beta^*} = 2.5$]{
  \includegraphics[width=0.45\textwidth]{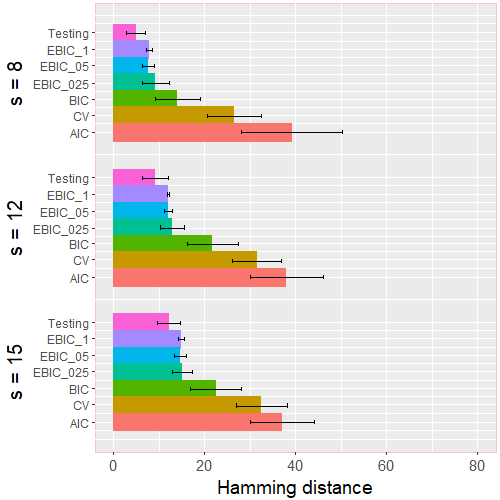}
  }
  ~~~~
\subfigure[~ $p = 200$, $\mu_{\beta^*} = 5$]{
  \includegraphics[width=0.45\textwidth]{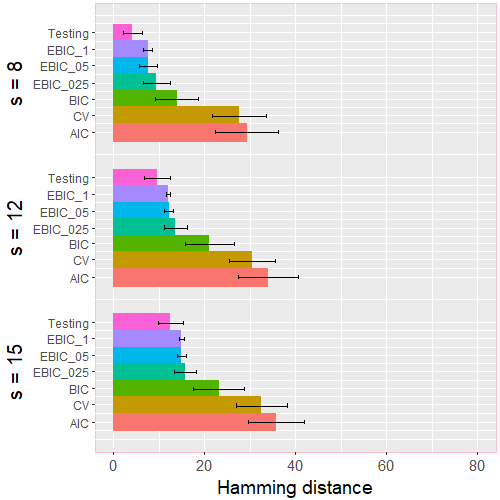}
  }
  \vspace{6mm}
 \subfigure[~ $p = 500$, $\mu_{\beta^*} = 2.5$]{{
  \includegraphics[width=0.45\textwidth]{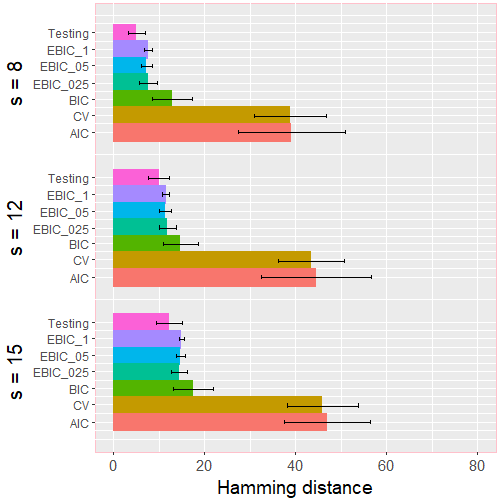}
  }}
  ~~~~
  \subfigure[~ $p = 500$, $\mu_{\beta^*} = 5$]{{
  \includegraphics[width=0.45\textwidth]{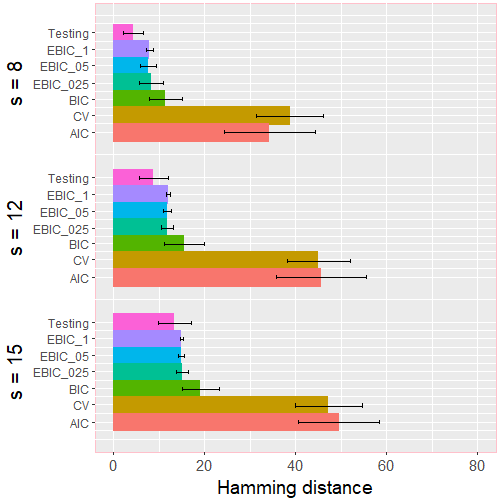}
  }}
  \caption{Variable selection errors of $\ell_1$-regularized logistic regression with seven different tuning parameter calibration schemes for settings described in~\ref{sec:appsim}. The $12$ simulation settings differ in the number of variables $p$, signal level $\mu_{\beta^*}$, and sparsity level~$s\in\{8,12,15\}$.}\label{fig:hdnewsetting}
\end{figure}

\begin{figure}
\centering
       \subfigure[~ $p = 500$, $\mu_{\beta^*} = 2.5$]{
  \includegraphics[width=0.45\textwidth]{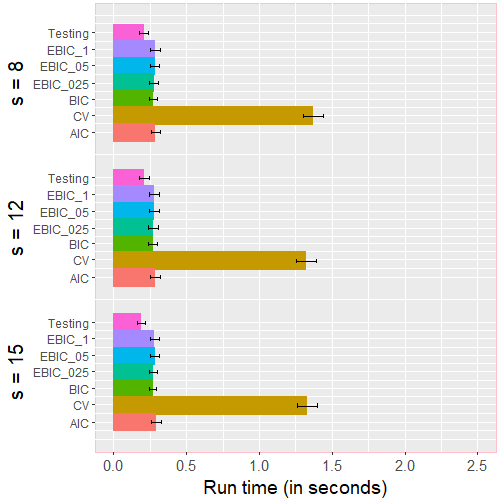}
  }
  ~~~~
\subfigure[~ $p = 500$, $\mu_{\beta^*} = 5$]{
  \includegraphics[width=0.45\textwidth]{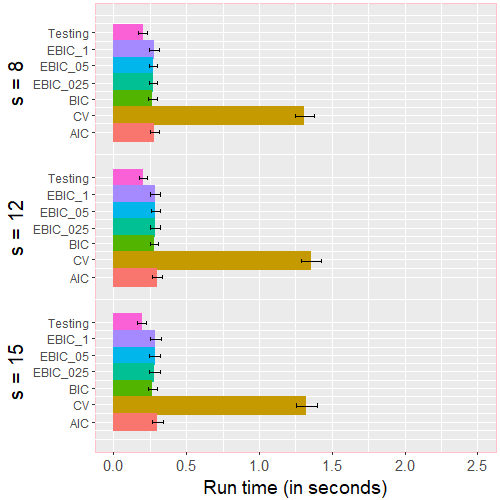}
  }
  \caption{Run times (in seconds) of $\ell_1$-regularized logistic regression with  seven different tuning parameter calibration schemes for settings described in~\ref{sec:appsim}. Depicted are the results for $p =500$, $\mu_{\beta^*}\in\{2.5,5\}$ and $s\in\{8,12,15\}$.}\label{fig:timenewsetting}
\end{figure}

\begin{figure}
\centering
       \subfigure[~ $p = 200$, $\mu_{\beta^*} = 2.5$]{
  \includegraphics[width=0.45\textwidth]{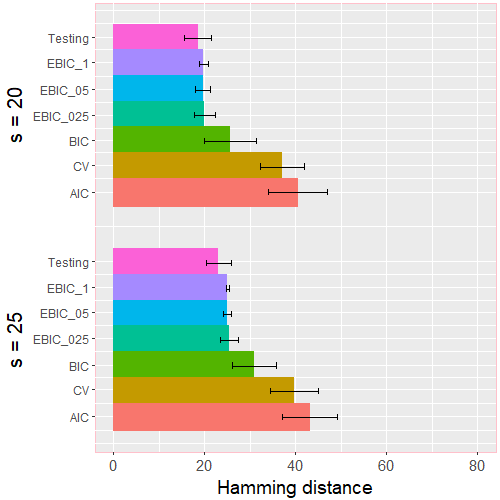}
  }
  ~~~~
\subfigure[~ $p = 200$, $\mu_{\beta^*} = 5$]{
  \includegraphics[width=0.45\textwidth]{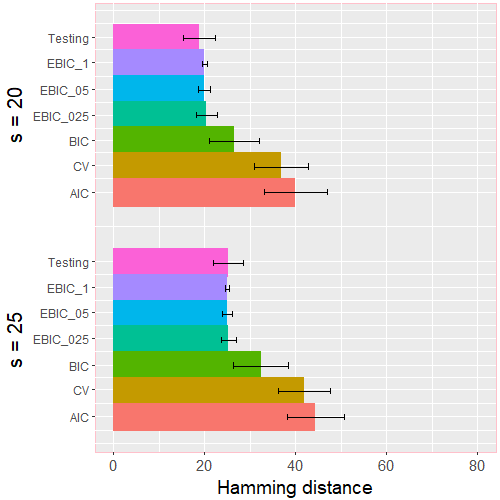}
  }
  \vspace{6mm}
 \subfigure[~ $p = 500$, $\mu_{\beta^*} = 2.5$]{{
  \includegraphics[width=0.45\textwidth]{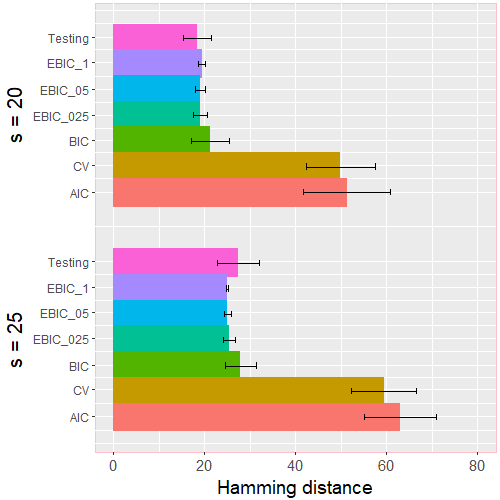}
  }}
  ~~~~
  \subfigure[~ $p = 500$, $\mu_{\beta^*} = 5$]{{
  \includegraphics[width=0.45\textwidth]{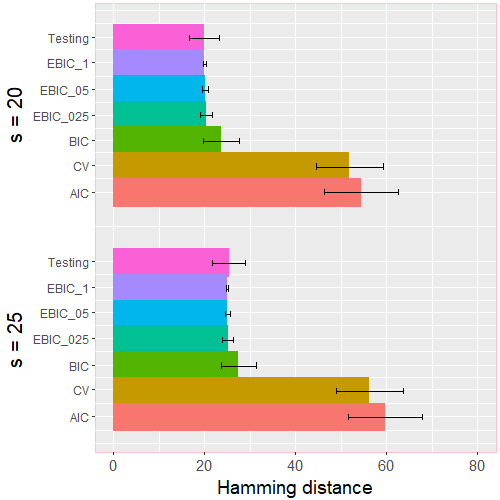}
  }}
  \caption{Variable selection errors of $\ell_1$-regularized logistic regression with seven different tuning parameter calibration schemes for settings described in~\ref{sec:appsim}. The $12$ simulation settings differ in the number of variables $p$, signal levels $\mu_{\beta^*}$, and sparsity level~$s\in\{20,25\}$.}\label{fig:hdnewsettingdense}
\end{figure}

\begin{figure}
\centering
       \subfigure[~ $p = 500$, $\mu_{\beta^*} = 2.5$]{
  \includegraphics[width=0.45\textwidth]{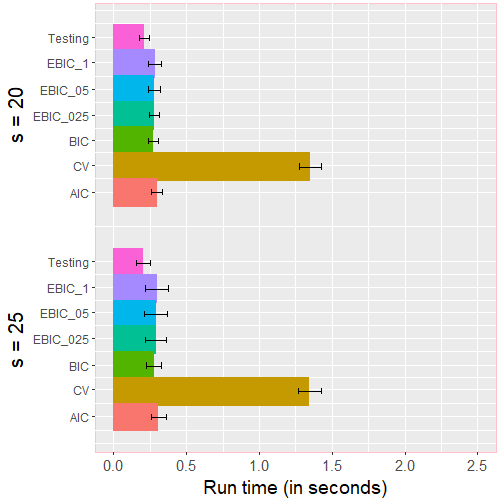}
  }
  ~~~~
\subfigure[~ $p = 500$, $\mu_{\beta^*} = 5$]{
  \includegraphics[width=0.45\textwidth]{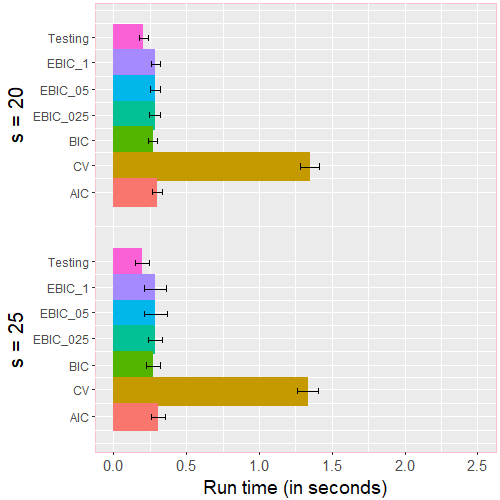}
  }
  \caption{Run times (in seconds) of $\ell_1$-regularized logistic regression with seven different tuning parameter calibration schemes for settings described in~\ref{sec:appsim}. Depicted are the results for $p =500$, $\mu_{\beta^*}\in\{2.5,5\}$ and $s\in\{20,25\}$.}\label{fig:timenewsettingdense}
\end{figure}

\begin{figure}
\centering
       \subfigure[~ $p = 3000$, $\kappa = 0$]{
  \includegraphics[width=0.45\textwidth]{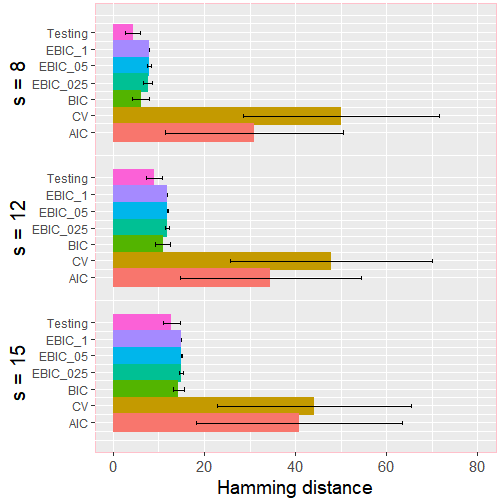}
  }
  ~~~~
\subfigure[~ $p = 3000$, $\kappa = 0.25$]{
  \includegraphics[width=0.45\textwidth]{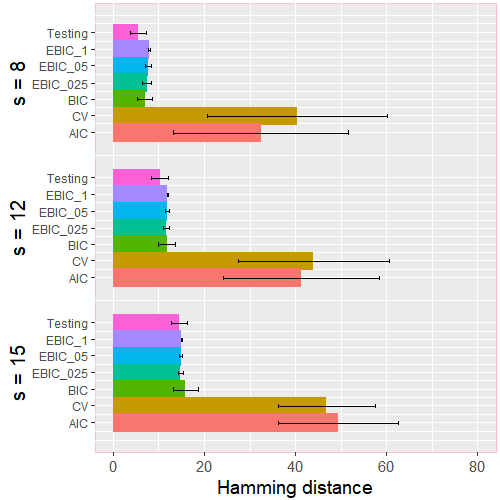}
  }
  \vspace{6mm}
 \subfigure[~ $p = 3000$, $\kappa = 0.5$]{{
  \includegraphics[width=0.45\textwidth]{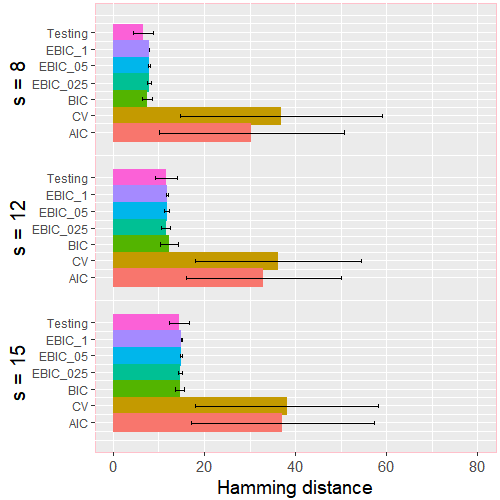}
  }}
  ~~~~
  \subfigure[~ $p = 3000$, $\kappa = 0.75$]{{
  \includegraphics[width=0.45\textwidth]{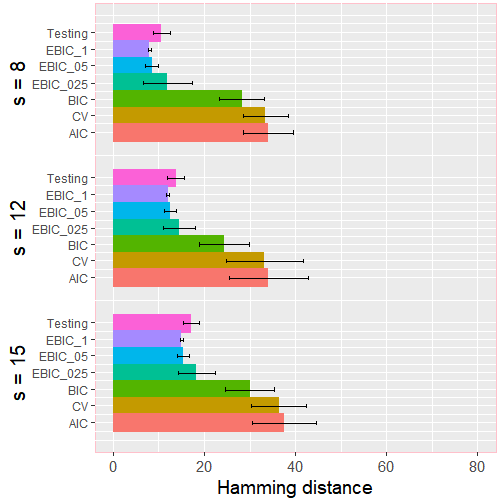}
  }}
  \caption{Variable selection errors of $\ell_1$-regularized logistic regression with  seven different tuning parameter calibration schemes for settings described in Section~\ref{sec:sim}. The $12$ simulation settings share $p=3000$ but differ in the  correlation level~$\kappa\in\{0, 0.25, 0.5, 0.75\}$ and sparsity level~$s\in\{8,12,15\}$.}\label{fig:hdappultrahigh}
\end{figure}

\begin{figure}
\centering
       \subfigure[~ $p = 3000$, $\kappa = 0$]{
  \includegraphics[width=0.45\textwidth]{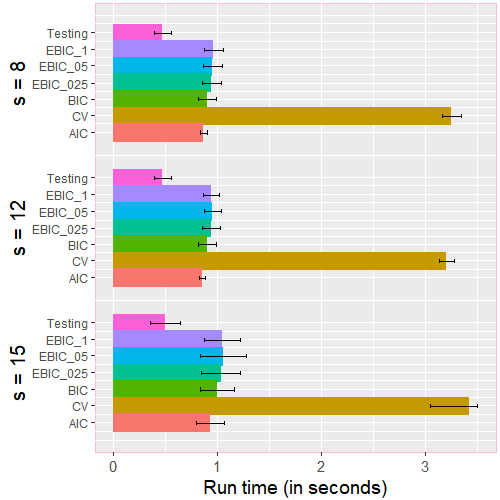}
  }
  ~~~~
\subfigure[~ $p = 3000$, $\kappa = 0.25$]{
  \includegraphics[width=0.45\textwidth]{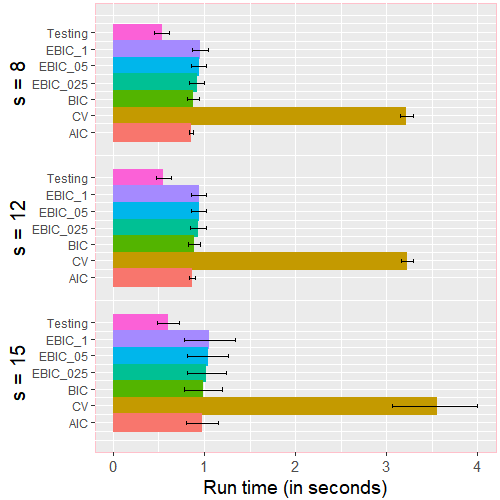}
  }
  \vspace{6mm}
 \subfigure[~ $p = 3000$, $\kappa = 0.5$]{{
  \includegraphics[width=0.45\textwidth]{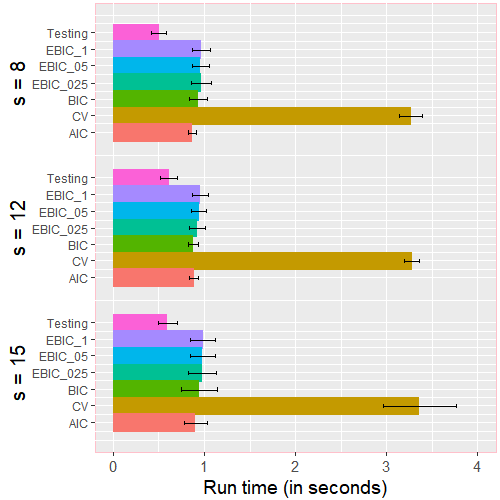}
  }}
  ~~~~
  \subfigure[~ $p = 3000$, $\kappa = 0.75$]{{
  \includegraphics[width=0.45\textwidth]{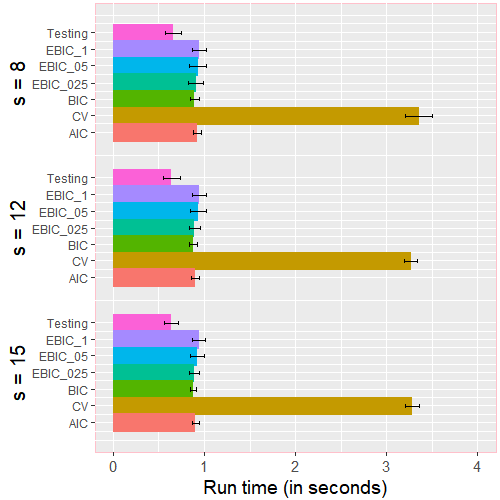}
  }}
  \caption{Run times (in seconds) of $\ell_1$-regularized logistic regression with  seven different tuning parameter calibration schemes for settings described in Section~\ref{sec:sim}. The $12$ simulation settings share $p=3000$ but differ in the  correlation level~$\kappa\in\{0, 0.25, 0.5, 0.75\}$ and sparsity level~$s\in\{8,12,15\}$.}\label{fig:timeappultrahigh}
\end{figure}

\begin{figure}
\centering
       \subfigure[~ $p = 3000$, $\kappa = 0$]{
  \includegraphics[width=0.45\textwidth]{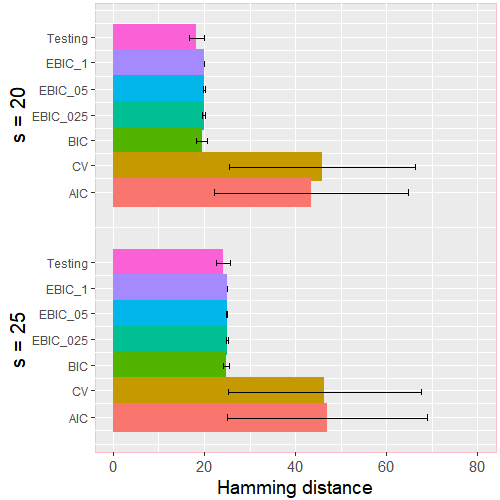}
  }
  ~~~~
\subfigure[~ $p = 3000$, $\kappa = 0.25$]{
  \includegraphics[width=0.45\textwidth]{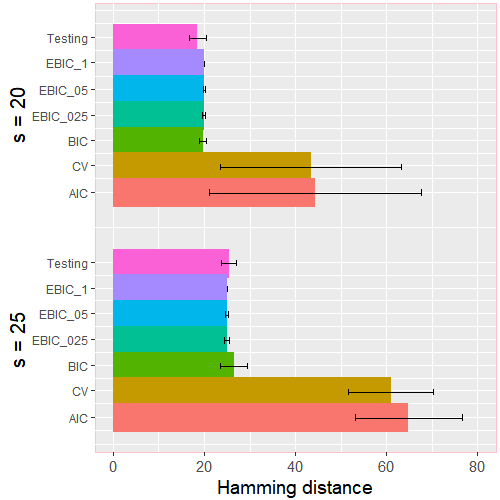}
  }
  \vspace{6mm}
 \subfigure[~ $p = 3000$, $\kappa = 0.5$]{{
  \includegraphics[width=0.45\textwidth]{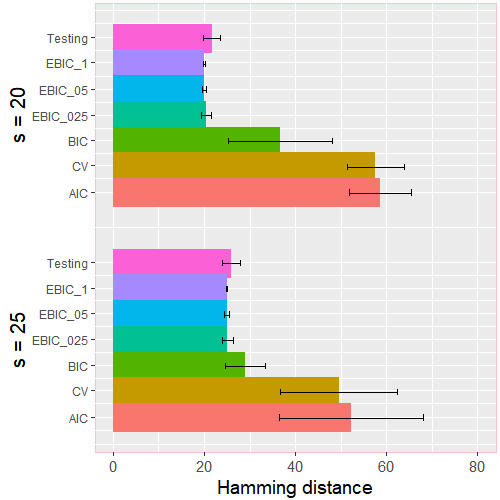}
  }}
  ~~~~
  \subfigure[~ $p = 3000$, $\kappa = 0.75$]{{
  \includegraphics[width=0.45\textwidth]{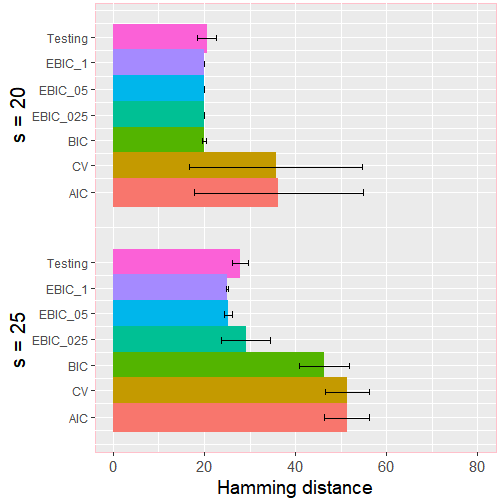}
  }}
  \caption{Variable selection errors of $\ell_1$-regularized logistic regression with  seven different tuning parameter calibration schemes for settings described in Section~\ref{sec:sim}. The $8$ simulation settings share $p=3000$ but differ in the  correlation level~$\kappa\in\{0, 0.25, 0.5, 0.75\}$ and sparsity level~$s\in\{20,25\}$.}\label{fig:hdappultrahighdense}
\end{figure}

\begin{figure}
\centering
       \subfigure[~ $p = 3000$, $\kappa = 0$]{
  \includegraphics[width=0.45\textwidth]{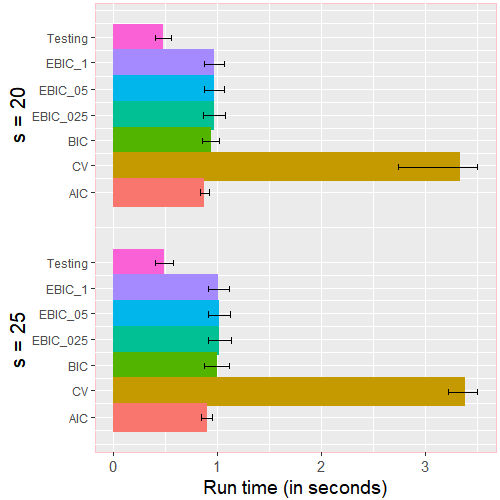}
  }
  ~~~~
\subfigure[~ $p = 3000$, $\kappa = 0.25$]{
  \includegraphics[width=0.45\textwidth]{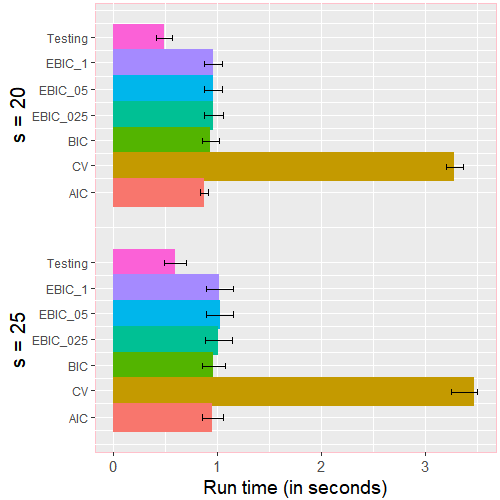}
  }
  \vspace{6mm}
 \subfigure[~ $p = 3000$, $\kappa = 0.5$]{{
  \includegraphics[width=0.45\textwidth]{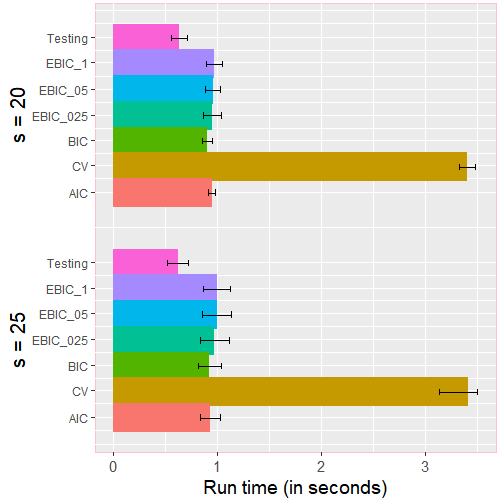}
  }}
  ~~~~
  \subfigure[~ $p = 3000$, $\kappa = 0.75$]{{
  \includegraphics[width=0.45\textwidth]{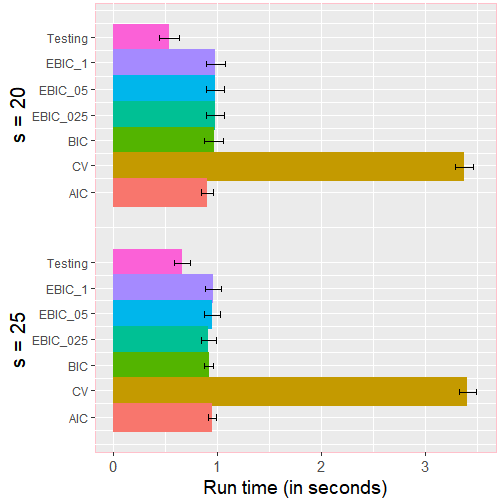}
  }}
  \caption{Run times (in seconds) of $\ell_1$-regularized logistic regression with  seven different tuning parameter calibration schemes for settings described in Section~\ref{sec:sim}. The $8$ simulation settings share $p=3000$, but differ in the  correlation level~$\kappa\in\{0, 0.25, 0.5, 0.75\}$, and sparsity level~$s\in\{20,25\}$.}\label{fig:timeappultrahighdense}
\end{figure}

\end{document}